\newtheorem{theorem}{Theorem}
\newtheorem{lemma}{Lemma}
\begin{document}

\title{Modeling Excess Mortality and Interest Rates using Mixed Fractional Brownian Motions}

\date{}

\author[1]{Kenneth Q. Zhou}
\author[2]{Hongjuan Zhou\thanks{Corresponding author. E-mail: {\tt Hongjuan.Zhou@asu.edu}}}

\affil[1]{Department of Statistics and Actuarial Science, University of Waterloo, Canada}
\affil[2]{School of Mathematical and Statistical Sciences, Arizona State University, USA}

\maketitle

\textbf{Abstract:}
Recent studies have identified long-range dependence as a key feature in the dynamics of both mortality and interest rates. Building on this insight, we develop a novel bivariate stochastic framework based on mixed fractional Brownian motions to jointly model their long-memory behavior and instantaneous correlation. Analytical solutions are derived under the pricing measure for explicitly valuing zero-coupon bonds and pricing extreme mortality bonds, while capturing the impact of persistent and correlated risk dynamics. We then propose a calibration procedure that sequentially estimates the model and risk premium parameters, including the Hurst parameters and the correlation parameter, using the most recent data on mortality rates, interest rates, and market conditions. Lastly, an extensive numerical analysis is conducted to examine how long-range dependence and mortality-interest correlation influence fair coupon rates, bond payouts and risk measures, providing practical implications for the pricing and risk management of mortality-linked securities in the post-pandemic environment. \\

\textbf{Keywords:} excess mortality, fractional Brownian motion, long-range dependence, correlated risk, mortality-linked security

\section{Introduction}

The accurate modeling of mortality risk plays a fundamental role in the insurance and pension industries, where the valuation and risk management of financial instruments are intricately tied to demographic uncertainties. Mortality-linked securities, such as longevity bonds and catastrophe mortality bonds, serve as vital tools for transferring risk from insurers to the capital markets. These instruments enable stakeholders to manage exposure to extreme demographic events while supporting the stability of financial systems. However, the development of these securities necessitates sophisticated models that account for the stochastic dynamics of mortality rates and their interplay with economic variables, particularly interest rates \cite{biffis2005affine, blackburn2013affine}.

Traditional approaches to mortality modeling often rely on deterministic methods, such as life tables or parametric laws (e.g., \cite{gompertz1825law, makeham1860law}). While these models provide useful baselines for life insurance pricing, they fail to capture the inherent randomness and temporal evolution of mortality rates, especially in the face of systemic demographic shocks. To address these limitations, stochastic mortality models have been developed, such as the Lee-Carter model \cite{lee1992modeling} and the Cairns-Blake-Dowd (CBD) models \cite{cairns2006twofactor, cairns2009comparison}. These seminal works introduced discrete-time stochastic frameworks for mortality forecasting, which have since been extended to continuous-time settings \cite{biffis2005affine, blackburn2013affine}. Such stochastic approaches offer more robust tools for assessing demographic risks over time and are particularly relevant when mortality is coupled with economic factors, such as fluctuating interest rates, which jointly influence the valuation of mortality-linked securities \cite{deelstra2024correlation, milevsky2001mortality}.

Despite significant advancements, existing models frequently overlook certain essential features of mortality and interest rate dynamics. For instance, traditional stochastic models often assume memoryless processes, such as Brownian motion, which lack the persistence observed in real-world data. Empirical evidence suggests that both mortality and interest rates exhibit long-range dependence (LRD), characterized by persistent correlations over extended time horizons \cite{Cajueiro2007, wang2021volterra, yan2019multivariate}. Recent actuarial studies have highlighted the importance of LRD in mortality modeling and demonstrated its critical role in accurately pricing longevity-linked securities and life insurance products \cite{deelstra2024correlation, yan2021longmemory, zhou2023impact}. Frameworks such as the fractional Ornstein-Uhlenbeck process \cite{delgado2019fou} and mixed fractional Poisson processes \cite{jiang2024mfpp} have been proposed to capture these effects, providing a theoretical basis for incorporating long-memory dynamics.

However, many stochastic mortality models fail to jointly model mortality and interest rates while accounting for their correlation and long-memory features \cite{wang2021longevity}. Furthermore, the interdependence between these two dynamics introduces additional complexity, as their joint behavior directly impacts the pricing of mortality-linked financial instruments \cite{deelstra2024correlation}. These valuable findings accentuate the evolving complexities in financial markets and human mortality. In particular, the COVID-19 pandemic has provided two important insights into these dynamics. First, it highlighted how unexpected shocks can drastically alter excess mortality and interest rate trajectories. Second, it demonstrated the long-term impact of health crises on economic conditions and demographic risks, which exposes limitations in existing models, particularly their inability to adapt to systemic shocks. These challenges highlight the need for a more comprehensive stochastic framework that integrates key features that are believed to significantly impact actuarial work, including long-range dependence, short-term volatility, mean-reverting trends, and correlations between mortality and economic factors \cite{jiang2024mfpp, wang2021longevity}.

This paper introduces a novel bivariate stochastic framework to jointly model excess mortality and interest rate dynamics. The mixed fractional Brownian motion that drives our stochastic model can effectively capture the long-range dependence inherent in both mortality and interest rate data. The drift part of our model incorporates mean-reverting trends that are essential for pricing mortality-linked securities. The model also considers short-term volatility and explicitly uses a correlation coefficient matrix to describe the interactions between mortality and interest rates. By addressing these interdependencies and dynamic features, our model framework aims to overcome limitations in existing stochastic mortality models and provide a robust foundation for actuarial and financial applications.

The contributions of this paper are three-fold. The first goal of this paper is to contribute in the research line on the applications of fractional Brownian motion in actuarial work and to explore the valuable insights that fractional models can offer to capture the joint dynamics of interest rate and mortality rate. As mentioned, a novel long-memory bivariate stochastic framework driven by a flexible long-memory process, mixed fractional Brownian motion, is proposed and studied. 
Fractional Brownian motion is known to have more flexibility than Brownian motion that has been traditionally used in actuarial literature. It can effectively capture the long-range dependence that has been proved to exist in financial market and mortality data. Due to the tractability challenges that classic It\^o calculus cannot be directly applied in stochastic models driven by fractional Brownian motion, the research on these models still remains a frontier in probability theory, and it also appears to be an emerging field in applications including both quantitative finance and actuarial science. Through our investigation in this paper, the model we present demonstrates another important application of fractional Brownian motion in actuarial work. It has sufficient flexibility to capture the fore-mentioned features to fill significant gaps in the actuarial literature of stochastic modeling and provides a robust tool for analyzing mortality-interest rate dynamics.

Second, we study the properties of the proposed model, including the probabilities of negative interest rates, the upper and lower bound estimates for the tail probability of extreme mortality rates, and the representation of the model under a risk-neutral pricing measure. Based on these properties, we derive analytical solutions for pricing financial instruments, such as zero-coupon bonds and catastrophe mortality bonds. Specifically, we obtain the value of zero-coupon bonds both at the time of issuance and at any subsequent time point during their lifetime. Furthermore, we investigate the application of the proposed model to price catastrophe mortality bonds, extending its utility to manage extreme mortality risks, similar to the ones considered in \cite{chen2022modeling} and \cite{schmitt2023providing}. These analytical solutions enhance the tractability of the model and provide valuable tools for actuarial and financial applications.

Third, the paper presents a comprehensive calibration methodology using weekly mortality data from the Short-Term Mortality Fluctuations series of the Human Mortality Database (HMD) and interest rate data from the Federal Reserve Economic Data (FRED). 
It should be mentioned that our framework models excess mortality rates, which provides flexibility in the choice of baseline mortality. While \cite{li2023pricing} adopts average mortality rates over the observation period as the baseline, we take a different approach that accounts for seasonality in weekly mortality patterns and allows for adaptability to potential systemic shifts in mortality dynamics.
A sequential calibration procedure is designed to estimate the parameters of the proposed bivariate model, including long-memory Hurst parameters to quantify long-term persistence and the correlation coefficient between mortality and interest rates. Numerical results reveal, similar to findings in \cite{li2023pricing}, that the correlation between mortality and interest rates is negative. Additionally, the proposed model is calibrated under the pricing measure to observed real-world catastrophe mortality bond prices, demonstrating its practical applicability for risk management and insurance pricing.

The paper is organized as follows. In Section 2, the stochastic calculus elements concerning the fractional Brownian motion and the mixed fractional Brownian motion are introduced. In Section 3, we propose our bivariate stochastic models to describe the joint dynamics of interest rates and excess mortality rates. The properties of the proposed model will be studied that include an estimate of the extreme value of excess mortality and the construction of the pricing measure $\mathbb{Q}$. In Section 4, actuarial mathematics is developed with respect to the pricing of mortality-linked securities. In Section 5, we will discuss the methodologies of calibrating our stochastic models under both physical measure $\mathbb{P}$ and pricing measure $\mathbb{Q}$. The simulation-based numerical analysis is applied in Section 6 to investigate the sensitivity of a stylized catastropic mortality bond to the changes in different parameters of our stochastic models. We conclude the paper in Section 7 by summarizing our main results and discussing potential future work.

\section{Stochastic calculus elements}

As the fractional Brownian motion is not Markovian, the classic It\^o calculus cannot be directly applied to study the stochastic integrals with respect to fBm. In this section, we introduce some important elements of stochastic analysis that we will use to analyze the properties of stochastic models driven by fractional Brownian motion.

\subsection{Fractional Brownian motion}
  
We consider a filtered probability space $(\Omega, \mathcal{F}, \mathcal{F}_t, \mathbb{P})$ such that all random variables are well defined and stochastic processes are $\mathcal{F}_t$ adapted. The natural filtration of a stochastic process is taken as the $\mathbb{P}$-completion of the filtration that it generates. For some fixed $T>0$, a (standard) fractional Brownian motion (fBm) on time interval $[0, T]$, $B^H = (B_t^H, t \in [0, T])$ with Hurst parameter $H \in (0, 1)$ is defined as a centered Gaussian process with continuous paths such that $B_0^H = 0$ and $${\rm Cov}(B^H_s, B^H_t) = \mathbb{E}\left(B^H_s B^H_t\right) = \frac{1}{2}\left(s^{2H} + t^{2H} - |s-t|^{2H}\right), \ 0 \leq s, t \leq T.$$
When $H = \frac{1}{2}$, the fBm reduces to the standard Brownian motion (Bm). When $H \neq \frac{1}{2}$, fBm is not a semimartingale and the classic It\^o theory cannot be applied to study the corresponding stochastic integrals. 

The construction of stochastic integrals can follow the traditional limiting arguments (see  \cite{hu19}, \cite{nu} and the references therein). Specifically, we consider the elementary space $\mathcal{E}$ of real valued step functions and define the Hilbert space $\mathfrak{H}$ as the closure of the space $\mathcal{E}$ endowed with the inner product $$\langle \mathbbm{1}_{[0, t]}, \mathbbm{1}_{[0, s]} \rangle_{\mathfrak{H}} = \mathbb{E}(B_s^H B_t^H).$$
The mapping $\mathbbm{1}_{[0, t]} \to B_t^H$ can be extended to  the linear isometry between the Hilbert space and the Gaussian subspace spanned by fBm. In other words, $\int_0^T \mathbbm{1}_{[0, t]} (s) dB_s^H = B^H_t$. By the limiting argument, the stochastic integral (also called the Wiener integral) $ \int_0^T f(t) dB_t^H$ whose integrand $f \in \mathfrak{H}$ is deterministic follows a centered Gaussian distribution. When $H>\frac{1}{2}$, the variance can be computed by
   \begin{equation}\label{var.formula}
       \mathbb{E}\left[\left(\int_0^T f(t) dB_t^H\right)^2\right] = \alpha_H \int_0^T \int_0^T f(s) f(t) |t-s|^{2H-2} ds dt,
   \end{equation}
where $\alpha_H = H(2H-1)$. The covariance between two Wiener integrals is computed by
   \begin{equation}\label{cov.formula}
       {\rm Cov}\left(\int_0^T f(t) dB_t^H, \int_0^T g(t) dB_t^H\right) = \alpha_H\int_0^T \int_0^T f(t) g(s) |t-s|^{2H-2} ds dt.
   \end{equation}

There are some integral transformations that can change fBm to martingales (see \cite{norros1999}). The following results from \cite{norros1999} will be used in this paper. We first introduce the deterministic functions for $0 < s < t \leq T$,
$$k_H(t, s) = \kappa_H^{-1} s^{\frac{1}{2} - H} (t-s)^{\frac{1}{2} - H}, \ \kappa_H = 2H\Gamma(\frac{3}{2}-H)\Gamma(H+\frac{1}{2}),$$
$$\omega_t^H = \lambda_H^{-1} t^{2-2H}, \ \lambda_H = \frac{2H\Gamma(3-2H)\Gamma(H+\frac{1}{2})}{\Gamma(\frac{3}{2}-H)}.$$
Then we can construct a Gaussian martingale $M_t^H = \int_0^t k_H(t, s) dB_s^H$ and it is called the fundamental martingale. The natural filtration generated by this margingale actually coincides with that of the fBm $B^H$, and the quadratic variation $\langle M^H \rangle_t$ equals $\omega_t^H$. The fundamental martingale is very useful in the stochastic analysis about fBm and one of the applications is to derive the Girsanov formula for fBm (see \cite{kleptsyna2000}). Namely, we consider a process $Y_t$ that is the pathwise solution to the following stochastic differential equation $$dY_t = A(t) dt + \sigma(t) dB_t^H, \ 0 \leq t \leq T,$$ where the functions $A(t), \sigma(t)$ satisfy some nice properties to ensure the existence of a pathwise solution for a given initial condition. The process defined by
$$Q_H(t) = \frac{d}{d \omega_t^H} \int_0^t k_H(t, s)\sigma^{-1}(s)A(s) ds, \ t \in [0, T],$$
belongs to the measurable space $L^2([0, T], d\omega^H)$. Then we can construct the random variable $$Z_H(T) = \exp\left(-\int_0^T Q_H(t) dM_t^H - \frac{1}{2} \int_0^T (Q_H(t))^2 d\omega_t^H\right)$$ such that $\mathbb{E}[Z_H(T)] = 1$. Consequently $\mathbb{Q} = Z(T)\mathbb{P}$ is a probability measure, and the process $Y$ under the measure $\mathbb{Q}$ 
satisfies the following equation $$dY_t = \sigma(t) dB_t^{H, \mathbb{Q}}, \ t \in [0, T],$$ where $B_t^{H, \mathbb{Q}}$ is a fBm with Hurst parameter $H$ under measure $\mathbb{Q}$.

\subsection{Mixed fractional Brownian motion}

We will use the combination of fBm and Bm for our proposed stochastic models, as it offers the flexibility of describing different types of driving noise. A mixed fractional Brownian motion (mfBm), $B = B^{\alpha, H} = \{B_t, t \in [0, T]\}$ with parameters $\alpha$ and $H$, was first introduced in \cite{cheridito2001}. It is defined as 
\begin{equation}\label{mfbm.def}
   B_t = \alpha W_t + B_t^H,
\end{equation}
where $\{W_t, t \in [0, T]\}$ is a standard Brownian motion, $\{B_t^H, t \in [0, T]\}$ is a fBm of Hurst parameter $H$, and these two processes are assumed to be independent. The parameter $\alpha \in \mathbb{R}^+$ governs the relative contribution of the standard Brownian component to the mfBm. 

An mfBm inherits many properties from the fBm. It is not a semimartingale if $H \neq \frac{1}{2}$. When $H=\frac{1}{2}$, the mfBm is essentially a Bm. When $H \in (\frac{1}{2}, 1)$, the increment of mfBm exhibits long range dependence (LRD). Namely, we can take a look at the correlation of the increments. Since the mfBm has stationary increments, we can compute $$\rho(k) := \mathbb{E}[B_1(B_{k+1} - B_k)] = \frac{1}{2}[(k+1)^{2H} - 2k^{2H} + (k-1)^{2H}],$$ for any natural number $k \geq 1$. Clearly, when $k$ is large, the correlation $\rho(k)$ behaves like $k^{2H-2}$ and $\sum_k \rho(k) = \infty$ when $H > \frac{1}{2}$. This shows that the mfBm has LRD when $H > \frac{1}{2}$.

To model the joint dynamics of interest rate and mortality rate under the influence of LRD noises, such as the persistent effects of systemic heathcare shocks and medical technology breakthroughs, it is natural to consider the use of mfBm. On the one hand, standard Bm is commonly employed to capture short-term stochastic fluctuations arising from the economic and ecological environment. On the other hand, there is growing empirical evidence that mortality rates and economic data exhibit long-memory behavior, as documented in \cite{zhou2022} and references therein. The LRD reflects the persistent impact of sociodemographic, medical, and environmental factors over extended periods. Similarly, studies such as \cite{li2023pricing} have shown interest rates, under stressed conditions, exhibit sensitivity to health-related shocks. They do not necessarily imply a direct causal relationship but they reveal meaningful empirical association. As noted, the COVID-19 pandemic has reinforced the interdependence and highlighted how systemic events can induce lasting correlated shifts in both financial market and human mortality. These observations suggest a modeling framework based on mfBm, in contrast to traditional Brownian models, offers more realistic and flexible approach for capturing the joint behavior of interest rates and mortality rates. In the next section, we introduce a general version of this framework and present its mathematical formulation.

\section{Model description} \label{sec:ModelDescr}

In this section, we propose our bivariate stochastic models for interest rates and excess mortality rates. The properties of the models and the construction of a risk-neutral measure will be derived.

\subsection{Stochastic models for interest rates and mortality rates}
  
Assume that the bivariate process $\{ X_t, 0 \leq t \leq T \}$ follows a mean-reverting process. Namely, for any $0 \leq t \leq T$,
$$dX_t = (m_t - \Theta_t X_t)dt + \Sigma_{1,t} A_t dW_t + \Sigma_{2,t} dB_t^H,$$
where $\Theta_t, \Sigma_{1,t}, \Sigma_{2,t}, A_t \in \mathbb{R}^{2 \times 2}$, and $m_t \in \mathbb{R}^2$ are all deterministic functions of $t$. $W_t = (W_{1, t}, W_{2, t})'$ for $0 \leq t \leq T$ is a standard bivariate Brownian motion. $B_t^H = (B_{1, t}^{H_1}, B_{2, t}^{H_2})'$ for $0 \leq t \leq T$ is a bivariate fractional Brownian motion with Hurst parameters $H = (H_1, H_2) \in (\frac{1}{2}, 1) \times (\frac{1}{2}, 1)$. We assume that the stochastic process $W_t$ is independent of $B_t$.

By taking $$\Theta_t = \left(\begin{matrix} \theta_1 & 0 \\ 0 & \theta_2 \end{matrix}\right), \ m_t = \left(\begin{matrix} m_1 \\ m_2 \end{matrix} \right), \ A_t = \left(\begin{matrix} \alpha_1 & 0 \\ 0 & \alpha_2 \end{matrix}\right), $$ $$\Sigma_{1,t} = \left(\begin{matrix} \sigma_1 & 0 \\ \frac{\sigma_2 \alpha_2}{\alpha_1} \rho & \sigma_2 \sqrt{1-\rho^2}\end{matrix}\right), \ \Sigma_{2,t} = \left(\begin{matrix} \sigma_1 & 0 \\ 0 & \sigma_2 \end{matrix}\right),$$ we can consider an important special case to model the interest rate process $r_t$ and the excess mortality process $\mu_t$. Equivalently, the bivariate process $(r_t, \mu_t)'$ satisfies the system of stochastic differential equations (SDEs), 
    \begin{eqnarray}
        d r_t & = & (m_1 - \theta_1 r_t) dt + \sigma_1 (\alpha_1 dW_{1, t} + d B_{1, t}^{H_1}), \nonumber\\
	d \mu_t & = & (m_2 - \theta_2 \mu_t) dt + \sigma_2 \left[\alpha_2 \left(\rho dW_{1, t} + \sqrt{1-\rho^2} dW_{2, t} \right) + d B_{2, t}^{H_2}\right]. 
        \label{int.mort.sde}
    \end{eqnarray}
It is essentially a Vasicek-type model. For example, the parameter $\theta_1$ measures the speed of reversion, the quantity $\frac{m_1}{\theta_1}$ measures the long-term mean level of interest rates, and the parameter $\sigma_1$ is the instantaneous volatility. A similar interpretation is applied for the excess mortality rate model.

From the equation system \eqref{int.mort.sde}, we clearly see that the interest rate and excess mortality rate are both driven by a mixed fractional Brownian motion. The dependence between interest rate and excess mortality is captured via the correlation parameter $\rho$. The instantaneous correlation between $r_t$ and $\mu_t$ can be computed using the following lemma.

\begin{lemma}\label{modelP.var.corr}
	Suppose the interest rate $r_t$ and excess mortality rate $\mu_t$ are described by the system of SDEs \eqref{int.mort.sde}. For $H \in (\frac{1}{2}, 1)$, the instantaneous covariance of the increments is given by $${\rm Cov}(d r_t, d \mu_t) = \rho \sigma_1 \sigma_2 \alpha_1 \alpha_2 dt$$
and the instantaneous correlation is given by $${\rm Corr}(d r_t, d \mu_t) \approx \frac{\rho \sigma_1 \sigma_2 \alpha_1 \alpha_2 dt}{\sqrt{\sigma_1^2 \alpha_1^2 dt + \sigma_1^2 (dt)^{2H_1}}\sqrt{\sigma_2^2 \alpha_2^2 dt + \sigma_2^2 (dt)^{2H_2}}}.$$
\end{lemma}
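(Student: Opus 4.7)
The plan is to treat $dr_t$ and $d\mu_t$ as the limiting form of finite increments over a small interval $[t, t+h]$ and work out the covariance structure there. The drift parts $(m_1 - \theta_1 r_t)\,dt$ and $(m_2 - \theta_2 \mu_t)\,dt$ are locally deterministic to leading order in $h$, so they contribute only $O(h^2)$ to any variance or covariance of increments and can be dropped. What remains is a linear combination of Brownian increments and fBm increments, so the computation reduces to applying bilinearity of covariance together with the independence structure: $W$ is independent of $B^H$, the components $W_{1,t}$ and $W_{2,t}$ are independent Brownian motions, and the components $B_{1,t}^{H_1}$ and $B_{2,t}^{H_2}$ of the bivariate fBm are independent.

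For the covariance, I would expand
\[
\mathrm{Cov}(dr_t, d\mu_t) = \mathrm{Cov}\bigl(\sigma_1\alpha_1\, dW_{1,t} + \sigma_1\, dB_{1,t}^{H_1},\ \sigma_2\alpha_2\rho\, dW_{1,t} + \sigma_2\alpha_2\sqrt{1-\rho^2}\, dW_{2,t} + \sigma_2\, dB_{2,t}^{H_2}\bigr).
\]
All cross terms vanish by the stated independences except for the $dW_{1,t}$–$dW_{1,t}$ term, which yields $\sigma_1\sigma_2\alpha_1\alpha_2\rho\cdot\mathrm{Var}(dW_{1,t}) = \rho\sigma_1\sigma_2\alpha_1\alpha_2\, dt$. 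For the variances, I use the same decomposition and invoke the elementary fact that the increment of fBm over $[t,t+h]$ has variance $h^{2H}$ (since $B^H$ has stationary Gaussian increments with $\mathrm{Var}(B_h^H) = h^{2H}$); combined with $\mathrm{Var}(dW_{i,t}) = dt$, independence of the Brownian and fractional components gives $\mathrm{Var}(dr_t) = \sigma_1^2\alpha_1^2\, dt + \sigma_1^2 (dt)^{2H_1}$ and analogously for $d\mu_t$. Dividing then produces the stated approximate correlation.

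The only subtle point is the interpretation of the symbol $dt$ when raised to the power $2H$, since for $H \in (\tfrac{1}{2},1)$ the term $(dt)^{2H}$ is of lower order than $dt$ as $dt \to 0$, so at a strictly infinitesimal level the fBm contribution dominates the Brownian one in the variance. The intended reading of the lemma is the finite-$h$ expression with $dt$ replaced by $h$, which gives an exact equality, and the $\approx$ in the correlation formula reflects the fact that we drop higher-order mixed contributions coming from the drift. This is where I would be careful to phrase the argument in terms of increments of size $h$ first, state the exact identity, and then pass formally to the infinitesimal notation used in the statement.
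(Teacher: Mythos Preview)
The paper states this lemma without proof, so there is no formal argument to compare against; your proposal is precisely the computation the authors must have in mind, and the core steps (drop the drift as $O(h^2)$ in the second moments, expand by bilinearity, use the mutual independences among $W_1,W_2,B_1^{H_1},B_2^{H_2}$, and invoke $\mathrm{Var}(B^H_{t+h}-B^H_t)=h^{2H}$) are correct.

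The one slip is in your closing interpretive remark. You write that for $H\in(\tfrac12,1)$ the term $(dt)^{2H}$ is of lower order than $dt$ and hence ``the fBm contribution dominates the Brownian one in the variance.'' This is backwards: since $2H>1$ we have $(dt)^{2H}=o(dt)$ as $dt\to 0$, so it is the \emph{Brownian} component that dominates at the infinitesimal level, and the instantaneous correlation therefore tends to $\rho$ as $dt\to 0$. The paper's discussion immediately following the lemma makes exactly this point. For finite but small $dt$ the fBm term merely inflates the variances and pushes the magnitude of the correlation below $|\rho|$. This does not affect the proof itself, only the commentary.
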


The instantaneous correlation given in Lemma~\ref{modelP.var.corr} demonstrates that Hurst parameters $H_1$ and $H_2$ affect the instantaneous correlation between the interest rate and the excess mortality rate when observed over a small time interval. Specifically, while the instantaneous covariance arising from the correlated Brownian motions is proportional to the time increment $dt$, the instantaneous correlation relates to $dt$ through the variance terms, which contain contributions from both the Brownian and fractional Brownian components. As the time increment $dt$ approaches zero when $H > \frac{1}{2}$, the instantaneous correlation will approach $\rho$. However, for small $dt$, the fractional Brownian motion inflates the variances, which tends to reduce the magnitude of the correlation compared to $\rho$.

In the next two theorems, we discuss the positiveness of interest rates and an estimate of extreme mortality rates. We will use the function $$\nu_t^2(\alpha, H, \theta, \sigma) = \frac{\alpha^2 \sigma^2 \left(1 - e^{-2\theta t}\right)}{2\theta} + \frac{\sigma^2 H(2H-1)}{\theta^{2H} } \left(\gamma(2H-1, \theta t) - e^{-\theta_1 t} \gamma^*(2H-1, \theta t)\right),$$
where the incomplete gamma functions are given by $$\gamma(u, x) = \int_0^x s^{u-1} e^{-s} ds, \quad \gamma^*(u, x) = \int_0^x (x-s)^{u-1} e^{-s} ds.$$

\begin{theorem}
   Suppose that the interest rate $r_t$ is described by the system of SDEs \eqref{int.mort.sde}. Then the random variable $r_t$ follows the normal distribution with mean $\frac{m_1}{\theta_1} + e^{-\theta_1 t}\left(r_0 - \frac{m_1}{\theta_1}\right)$ and variance $\nu_t^2(\alpha_1, H_1, \theta_1, \sigma_1)$.
   In addition, the probability that $r_t$ is nonnegative is given by 
   \begin{equation}\label{rt.pos}
       P(r_t \geq 0) = \Phi\left(\frac{\frac{m_1}{\theta_1} + e^{-\theta_1 t}\left(r_0 - \frac{m_1}{\theta_1}\right)}{\nu_t(\alpha_1, H_1, \theta_1, \sigma_1)}\right),
   \end{equation}
   where $\Phi(\cdot)$ denotes the cumulative distribution function of the standard normal distribution.
\end{theorem}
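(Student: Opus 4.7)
The plan is to solve the linear (Vasicek-type) SDE for $r_t$ explicitly and then read off the distribution. Since $r_t$ satisfies a linear SDE with deterministic coefficients, I would apply the integrating factor $e^{\theta_1 t}$ to the equation $dr_t = (m_1-\theta_1 r_t)\,dt + \sigma_1\alpha_1\,dW_{1,t} + \sigma_1\,dB^{H_1}_{1,t}$ and integrate from $0$ to $t$, obtaining
\begin{equation*}
r_t = \frac{m_1}{\theta_1} + e^{-\theta_1 t}\!\left(r_0 - \frac{m_1}{\theta_1}\right) + \sigma_1\alpha_1\!\int_0^t e^{-\theta_1(t-s)}\,dW_{1,s} + \sigma_1\!\int_0^t e^{-\theta_1(t-s)}\,dB^{H_1}_{1,s}.
\end{equation*}
Since $W_{1,\cdot}$ and $B^{H_1}_{1,\cdot}$ are independent and the integrands $e^{-\theta_1(t-\cdot)}$ are deterministic and regular enough to belong to the appropriate Wiener spaces, both stochastic integrals are centered Gaussian Wiener integrals (in the fBm case, the integral is understood as in Section~2.1). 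Consequently $r_t$ is Gaussian, its mean is the deterministic part displayed above, and its variance is the sum of the variances of the two independent Wiener integrals.

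For the Brownian part, a direct computation gives $\alpha_1^2\sigma_1^2\int_0^t e^{-2\theta_1(t-s)}\,ds = \alpha_1^2\sigma_1^2(1-e^{-2\theta_1 t})/(2\theta_1)$, matching the first term in $\nu_t^2$. For the fBm part I would invoke the covariance formula \eqref{var.formula} to write the variance as $\sigma_1^2\alpha_{H_1}\,I$, where
\begin{equation*}
I \;=\; \int_0^t\!\!\int_0^t e^{-\theta_1(t-s)}e^{-\theta_1(t-u)}|s-u|^{2H_1-2}\,ds\,du.
\end{equation*}
I would reduce $I$ to the claimed incomplete gamma form by exploiting symmetry to restrict to $u<s$, substituting $w=s-u$ in the inner integral to get $I=2\int_0^t e^{-2\theta_1(t-s)}\!\int_0^s e^{-\theta_1 w}w^{2H_1-2}dw\,ds$, then rescaling the inner integral via $y=\theta_1 w$ to express it as $\theta_1^{-(2H_1-1)}\gamma(2H_1-1,\theta_1 s)$. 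An integration by parts in $s$ (differentiating $\gamma$ and integrating the exponential) produces a boundary term yielding $\gamma(2H_1-1,\theta_1 t)$ and a remaining integral which, after the change of variable $v=t-s$, matches the definition of $\gamma^*(2H_1-1,\theta_1 t)$ up to the factor $e^{-\theta_1 t}/\theta_1^{2H_1}$. Combining gives exactly the second term in $\nu_t^2(\alpha_1,H_1,\theta_1,\sigma_1)$.

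Having established Gaussianity, the probability statement is immediate: standardize $r_t$ and use the symmetry $1-\Phi(-x)=\Phi(x)$ to obtain \eqref{rt.pos}. The main obstacle is the closed-form evaluation of $I$; although it is ultimately a careful manipulation, the combination of the asymmetric $|s-u|^{2H_1-2}$ kernel with two exponential factors requires both the symmetry reduction and a well-chosen integration by parts to land on the $\gamma$ and $\gamma^*$ functions rather than being left with a residual convolution integral. Everything else — solving the linear SDE, identifying the Gaussian law, and computing $P(r_t\geq 0)$ — is routine once the fBm integral is handled.
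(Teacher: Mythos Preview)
Your proposal is correct and follows essentially the same route as the paper: solve the linear SDE via the integrating factor, identify $r_t$ as Gaussian because the Wiener integrals against $W_1$ and $B_1^{H_1}$ are independent and centered Gaussian, and then reduce the fBm double integral to the incomplete gamma functions $\gamma$ and $\gamma^*$. The only cosmetic difference is in how the double integral is collapsed: the paper performs a single change of variables $x=\theta_1(u-v)$, $y=\theta_1 v$ and reads off the two one-dimensional integrals directly, whereas you first symmetrize to $u<s$, substitute $w=s-u$, and then integrate by parts in $s$; both manipulations land on the same expression $\theta_1^{-2H_1}\bigl[\gamma(2H_1-1,\theta_1 t)-e^{-\theta_1 t}\gamma^*(2H_1-1,\theta_1 t)\bigr]$.
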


\begin{proof}
   The pathwise solution to the SDE for $r_t$ with initial condition $r_0$ is
     $$r_t = \frac{m_1}{\theta_1} + e^{-\theta_1 t}\left(r_0 - \frac{m_1}{\theta_1}\right) + \sigma_1 \int_0^t e^{-\theta_1(t-s)} d(\alpha_1 W_{1,s} + B_{1,s}^{H_1}). $$
     Clearly, it is a Gaussian random variable with mean $\frac{m_1}{\theta_1} + e^{-\theta_1 t}\left(r_0 - \frac{m_1}{\theta_1}\right)$ according to the property of stochastic integrals. To compute the variance, we use the formula \eqref{var.formula} to derive
     \begin{eqnarray*}
         {\rm Var}[r_t] 
          & = & {\rm Var}\left[\sigma_1 e^{-\theta_1 t} \left(\int_0^t \alpha_1 e^{\theta_1 s} dW_{1,s} + \int_0^t e^{\theta_1 s} dB_{1, s}^{H_1}\right)\right] \\
          & = & \sigma_1^2 e^{-2\theta_1 t} \left(\alpha_1^2 \int_0^t e^{2\theta_1 s} ds + H_1(2H_1-1)\int_0^t \int_0^t e^{\theta_1 (u+v)} |u-v|^{2H_1-2} du dv\right) \\
          & = & \frac{\alpha_1^2 \sigma_1^2 \left(1 - e^{-2\theta_1 t}\right)}{2\theta_1} + \frac{\sigma_1^2 H_1(2H_1-1)}{\theta_1^{2H_1}} \left(\int_0^{\theta_1 t} e^{-x} x^{2H_1-2} dx - \int_0^{\theta_1 t} e^{-2\theta_1 t + x} x^{2H_1-2} dx\right),
     \end{eqnarray*}
     where we have made change of variables $x=\theta_1(u-v)$, $y=\theta_1 v$ when evaluating the double integral. As $r_t$ is Gaussian, it is clear that 
     $$P(r_t \geq 0) = \Phi\left(\frac{\frac{m_1}{\theta_1} + e^{-\theta_1 t}\left(r_0 - \frac{m_1}{\theta_1}\right)}{\sqrt{\frac{\alpha_1^2 \sigma_1^2 \left(1 - e^{-2\theta_1 t}\right)}{2\theta_1}} + \frac{\sigma_1^2 H_1(2H_1-1)}{\theta_1^{2H_1}} \left(\gamma(2H_1-1, \theta_1 t) -  e^{-\theta_1 t} \gamma^*(2H_1-1, \theta_1 t)\right)} \right),$$
     which is consistent with \eqref{rt.pos}.
\end{proof}
The variance function $\nu_t^2$ is an increasing function of time $t$. It should be noted that as $t \to \infty$, we have $$\lim_{t \to \infty} \nu_t^2 = \frac{\alpha_1^2 \sigma_1^2}{2\theta_1} + \frac{\sigma_1^2 \Gamma(2H_1+1)}{2\theta_1^{2H_1}},$$ and moreover, when $t$ is large, 
  $$P(r_t \geq 0) \approx \Phi\left(\frac{\frac{m_1}{\theta_1}}{\sqrt{\frac{\alpha_1^2 \sigma_1^2}{2\theta_1} + \frac{\sigma_1^2 \Gamma(2H_1+1)}{2\theta_1^{2H_1}}}}\right) = \Phi\left(\sqrt\frac{2m_1^2}{\alpha_1^2 \sigma_1^2 \theta_1 + \sigma_1^2 \Gamma(2H_1+1)\theta_1^{2-2H_1}}\right).$$
During the modeling process, if a negative interest rate is believed to be rare, we have to carefully choose the applicable time interval for the interest rate model so that the probability of interest rate being positive is sufficiently close to one.

Similarly, we can see that the excess mortality rate $\mu_t$ given by \eqref{int.mort.sde} is also a Gaussian process with mean $\mathbb{E}[\mu_t] = \frac{m_2}{\theta_2} + e^{-\theta_2 t}\left(\mu_0 - \frac{m_2}{\theta_2}\right)$ and variance ${\rm Var}[\mu_t] = \nu_t^2(\alpha_2, H_2, \theta_2, \sigma_2)$. Using properties of Gaussian processes, we can find an upper and a lower bound for the tail probability of extreme excess mortality rates in the time interval $[0, T]$. This is stated in the following theorem.
\begin{theorem}
   The maximum value of the excess mortality rate $\mu_t$ defined by \eqref{int.mort.sde} in the time interval $[0, T]$ satisfies the following inequality
   \begin{equation}\label{max.lower}
       P\left(\sup_{0 \leq t \leq T} \mu_t \geq a + \mathbb{E}[\mu_t] \right) \geq \int_{c*}^\infty \sqrt{\frac{2}{\pi}} e^{-\frac{x^2}{2}} dx,
   \end{equation}
   where $c* = a \nu_T^{-1}(\alpha_2, H_2, \theta_2, \sigma_2).$ Moreover,
   \begin{equation}\label{max.upper}
      P\left(\sup_{0 \leq t \leq T} \mu_t \geq a + \mathbb{E}[\sup_{0 \leq t \leq T} \mu_t] \right) \leq \exp\left(-\frac{a^2}{2\nu_T^2(\alpha_2, H_2, \theta_2, \sigma_2)}\right).
   \end{equation}
\end{theorem}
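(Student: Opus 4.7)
The plan is to exploit the Gaussianity of $\mu_t$ established via the pathwise solution to its SDE and then invoke standard concentration and anti-concentration results for the supremum of a continuous Gaussian process on the compact interval $[0,T]$. Applying the same pathwise-solution argument used in the previous theorem for $r_t$ shows that $\mu_t$ is a continuous Gaussian process with mean $\mathbb{E}[\mu_t] = \frac{m_2}{\theta_2} + e^{-\theta_2 t}\bigl(\mu_0 - \frac{m_2}{\theta_2}\bigr)$ and variance $\nu_t^2(\alpha_2, H_2, \theta_2, \sigma_2)$. The explicit formula makes $\nu_t^2$ manifestly increasing in $t$, so $\sup_{0 \leq t \leq T} \mathrm{Var}(\mu_t) = \nu_T^2$, and $\mathbb{E}[\sup_t \mu_t]$ is finite by sample-path continuity on $[0,T]$ together with Dudley's entropy estimate for the Gaussian pseudometric induced by $\nu_t$.

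For the upper bound \eqref{max.upper}, I would invoke the one-sided Borell--TIS (Borell--Sudakov--Tsirelson) inequality: for a continuous Gaussian process $X_t$ with $\sigma^2 := \sup_t \mathrm{Var}(X_t) < \infty$,
\[
P\bigl(\sup_t X_t - \mathbb{E}[\sup_t X_t] \geq a\bigr) \leq \exp\bigl(-a^2/(2\sigma^2)\bigr).
\]
Taking $X_t = \mu_t$ and $\sigma^2 = \nu_T^2$ delivers \eqref{max.upper} immediately, so this half of the theorem is essentially a plug-in application once the variance monotonicity has been recorded.

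For the lower bound \eqref{max.lower}, I would center by setting $Y_t := \mu_t - \mathbb{E}[\mu_t]$, a continuous centered Gaussian process with $Y_0 = 0$ and $Y_T \sim N(0, \nu_T^2)$. The trivial inclusion $\{Y_T \geq a\} \subseteq \{\sup_t Y_t \geq a\}$ already yields $P(\sup_t Y_t \geq a) \geq \int_{c^*}^\infty (1/\sqrt{2\pi})\, e^{-x^2/2}\,dx$. To upgrade this to the factor $\sqrt{2/\pi} = 2/\sqrt{2\pi}$ stated in \eqref{max.lower}, I would use a reflection-type refinement: compare $Y_t$ via Slepian's inequality with a continuous Gaussian martingale $M_t$ of matching terminal variance $\nu_T^2$, for which the classical reflection principle yields $P(\sup_t M_t \geq a) = P(|M_T| \geq a) = \int_{c^*}^\infty \sqrt{2/\pi}\,e^{-x^2/2}\,dx$, and transfer the bound back to $Y_t$ through the covariance ordering.

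The main obstacle is precisely this factor-of-two sharpening. Because $Y_t$ is an Ornstein--Uhlenbeck type process driven by a mixed fractional Brownian motion, it is \emph{not} a martingale, and the reflection principle does not apply to $Y_t$ directly. The Slepian comparison requires verifying an inequality of the form $\mathrm{Cov}(Y_s, Y_t) \leq \mathrm{Cov}(M_s, M_t)$ while matching terminal variances, which has to be checked at the level of the explicit Volterra kernel representation of $Y_t$ inherited from Section~2; this is the one step that is not entirely routine. The upper bound, by contrast, is a direct consequence of Borell--TIS once the variance monotonicity $\nu_t^2 \leq \nu_T^2$ is noted.
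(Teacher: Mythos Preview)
Your proposal is correct and follows essentially the same route as the paper: Borell--TIS for the upper bound, and for the lower bound a Slepian comparison against a Gaussian comparison process with martingale-type covariance (the paper takes $\widehat{X}$ with $\mathbb{E}[\widehat{X}_s\widehat{X}_t]=\gamma_{s\wedge t,\,s\wedge t}$, matching the variances of $Y$ at \emph{every} time, not just at $T$, as Slepian's lemma requires) followed by the reflection principle. The covariance ordering $\mathrm{Cov}(Y_s,Y_t)\le \nu_{s\wedge t}^2$ that you correctly flag as the non-routine step is exactly what the paper verifies, via Cauchy--Schwarz together with a lemma from Hu's monograph on fractional Brownian motion.
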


\begin{proof}
    We follow the proof idea in \cite{ruzmaikina} by using Slepian’s lemma (see \cite{slepian}) to prove this theorem.
    Denote $$X(t) = \alpha_2 \int_0^t e^{-\theta_2(t-s)} d( \rho dW_{1, s} + \sqrt{1-\rho^2} dW_{2, s}) + \int_0^t e^{-\theta_2(t-s)} d B_{2,s}^{H_2}.$$ By the property of mixed fractional Brownian motion, we know $\mathbb{E}[X_t]=0$ and apply the formula \eqref{cov.formula} to obtain the covariance
    $$\gamma_{t, s} := \mathbb{E}[X_t X_s]= \int_0^t \int_0^s e^{-\theta_2(t-u)} e^{-\theta_2(s-v)} |u-v|^{2H_2-2} du dv + \alpha_2^2 \int_0^{t\wedge s} e^{-\theta_2(t+s-2u)} du.$$
    Next we introduce another continuous Gaussian process $\widehat X_t$ with $\mathbb{E}[\widehat X_t] = 0$ and $\mathbb{E}[\widehat X_t \widehat X_s] = \gamma_{s, s}$ for $s \leq t$. By this construction, $\widehat X_t$ is a Markov process. By the reflection principle, we have the following estimate
      $$P(\max_{0 \leq t \leq T} \widehat X_t \geq k) = 2P(Y(T) \geq k) = \int_{k/\sqrt{\gamma_{T,T}}}^\infty \sqrt{\frac{2}{\pi}} e^{-\frac{x^2}{2}} dx.$$
    Clearly, $\mathbb{E}[X_t^2] = \mathbb{E}[\widehat X_t^2]$ and we can use the results of Lemma 18 in \cite{hu19} and Cauchy-Schwarz inequality to verify that  and $\mathbb{E}[X_t X_s] \leq \mathbb{E}[\widehat X_t \widehat X_s]$. Thus, the processes $X(t)$ and $\widehat X(t)$ satisfy the assumptions of Slepian’s lemma. Applying Slepian’s lemma yields
   $$P\left(\max_{0 \leq t \leq T} \mu_t  \geq a + \mathbb{E}[\mu_t]\right) = P\left(\max_{0 \leq t \leq T} \sigma_2 X(t) \geq a\right) \geq P(\max_{0 \leq t \leq T} \widehat X_t \geq a \sigma_2^{-1}) = \int_{\frac{a} {\sigma_2\sqrt{\gamma_{T,T}}}}^\infty \sqrt{\frac{2}{\pi}} e^{-\frac{x^2}{2}} dx.$$
   Taking into account of $\nu_T^2(\alpha_2, H_2, \theta_2, \sigma_2) = \sigma_2^2 \gamma_{T, T}$, we obtain the inequality \eqref{max.lower}.
   For the second inequality \eqref{max.upper}, we can apply Borell-TIS inequality (see \cite{adler2007}, \cite{Borell1975}, \cite{MarcusShepp1972}). Namely, the fact that $\mu_t$ is a bounded Gaussian process in the time interval $[0, T]$ yields
     $$P\left(\sup_{t \in [0, T]} X_t - \mathbb{E}\left[\sup_{t \in [0, T]} X_t\right] \geq a\right) \leq \exp\left(-\frac{a^2}{2 \sup_{t \in [0, T]} \gamma_{t, t}}\right) = \exp\left(-\frac{a^2}{2 \gamma_{T, T}}\right).$$
   This completes the proof of the theorem.
\end{proof}

Next, we study the change of measure for the special model \eqref{int.mort.sde}. Motivated by \cite{li2023pricing}, we take the simplest choice for the market price of diffusion risks, that is, 
  \begin{eqnarray}
	  dW_{i, t} & = & dW_{i, t}^{\mathbb{Q}} + \gamma_i dt, \ i = 1, 2, \nonumber \\
	  dB_{j, t}^{H_j} & = & dB_{j, t}^{H_j, \mathbb{Q}} + \eta_j dt, \ j=1, 2, \label{meas.q}
  \end{eqnarray} 
for $t \in [0, T]$. The parameters $\gamma_i$ and $\eta_j$ specify the market prices of the risks driven by the Brownian motions and the fractional Brownian motions, respectively. The existence and construction of the probability measure $\mathbb{Q}$ are specified in the following theorem and its proof. Note that this construction of the pricing measure is restricted to a finite-time horizon $[0, T]$, but it suffices for our actuarial study.

\begin{theorem}\label{prob.measure.q}
  There exists a probability measure $\mathbb{Q}$ such that the system of equations \eqref{meas.q} is valid on $[0, T]$. Under the pricing measure $\mathbb{Q}$, $(W_{1, t}^{\mathbb{Q}}, W_{2, t}^{\mathbb{Q}})$ is a standard bivariate Brownian motion, and $(B_{1, t}^{H_1, \mathbb{Q}}, B_{2, t}^{H_2, \mathbb{Q}})$ is a two-dimensional fBm with Hurst parameters $(H_1, H_2)$ respectively. These two multivariate processes are mutually independent. Then the model in \eqref{int.mort.sde} on time interval $[0, T]$ can be rewritten as
  \begin{eqnarray}
	  d r_t  & = & (m'_1 - \theta_1 r_t) dt + \sigma_1 \left(\alpha_1 dW_{1, t}^{\mathbb{Q}} + d B_{1, t}^{H_1, \mathbb{Q}}\right), \nonumber \\
	 d \mu_t & = & (m'_2 - \theta_2 \mu_t) dt + \sigma_2 \left[\alpha_2 \left(\rho dW_{1, t}^{\mathbb{Q}} + \sqrt{1-\rho^2} dW_{2, t}^{\mathbb{Q}} \right) + d B_{2, t}^{H_2, \mathbb{Q}}\right], \label{int.mort.qsde}
  \end{eqnarray}
where $m'_1 = m_1 + \alpha_1 \sigma_1 \gamma_1 + \sigma_1 \eta_1$ and $m'_2 = m_2 + \alpha_2 \sigma_2 \rho \gamma_1 + \alpha_2 \sigma_2 \sqrt{1 - \rho^2} \gamma_2 + \sigma_2 \eta_2$.
\end{theorem}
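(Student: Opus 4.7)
The plan is to construct $\mathbb{Q}$ as a product change of measure, exploiting the independence of the Brownian component $W=(W_1,W_2)$ from the fractional component $B^H=(B_1^{H_1},B_2^{H_2})$, and the (assumed) independence of the two fBm coordinates from each other. Specifically, I would write the candidate Radon-Nikodym density on $\mathcal{F}_T$ as a product $Z(T)=Z^W(T)\cdot Z^{H_1}(T)\cdot Z^{H_2}(T)$, where $Z^W$ effects the Brownian drift shifts and each $Z^{H_j}$ effects the fractional drift shift on the $j$-th fBm, and then show $\mathbb{E}_{\mathbb{P}}[Z(T)]=1$ so that $\mathbb{Q}=Z(T)\,\mathbb{P}$ defines a probability measure.

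For the Brownian part, I would take the classical Cameron-Martin-Girsanov density
\[
Z^W(T)=\exp\!\left(-\gamma_1 W_{1,T}-\gamma_2 W_{2,T}-\tfrac{1}{2}(\gamma_1^2+\gamma_2^2)T\right),
\]
which is a martingale for all $T<\infty$ and whose associated measure makes $W_{i,t}^{\mathbb{Q}}=W_{i,t}-\gamma_i t$ a standard bivariate Brownian motion. For each fractional component I would invoke the Girsanov-type formula recalled in Section 2.1 (via \cite{kleptsyna2000}, \cite{norros1999}): apply it to the trivial SDE $dY_t=\eta_j\,dt+dB_{j,t}^{H_j}$, so that $A(t)\equiv\eta_j$ and $\sigma(t)\equiv 1$. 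The corresponding kernel
\[
Q_{H_j}(t)=\eta_j\,\frac{d}{d\omega_t^{H_j}}\!\int_0^t k_{H_j}(t,s)\,ds
\]
is deterministic and, by a direct estimate using the explicit form of $k_{H_j}$ and $\omega_t^{H_j}$, lies in $L^2([0,T],d\omega^{H_j})$ on any finite horizon. Then
\[
Z^{H_j}(T)=\exp\!\left(-\int_0^T Q_{H_j}(t)\,dM_t^{H_j}-\tfrac{1}{2}\int_0^T Q_{H_j}(t)^2\,d\omega_t^{H_j}\right)
\]
satisfies $\mathbb{E}_{\mathbb{P}}[Z^{H_j}(T)]=1$, and under the measure with density $Z^{H_j}(T)$ the process $B_{j,t}^{H_j,\mathbb{Q}}:=B_{j,t}^{H_j}-\eta_j t$ is an fBm with Hurst parameter $H_j$.

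Because $W$, $B_1^{H_1}$, $B_2^{H_2}$ are mutually independent under $\mathbb{P}$ and each factor depends only on its own process, the three densities are mutually independent, so $\mathbb{E}_{\mathbb{P}}[Z(T)]=1$ and $\mathbb{Q}:=Z(T)\,\mathbb{P}$ is a probability measure. Moreover, independence is preserved under $\mathbb{Q}$: computing the joint characteristic functional of $(W^{\mathbb{Q}},B_1^{H_1,\mathbb{Q}},B_2^{H_2,\mathbb{Q}})$ under $\mathbb{Q}$ factorizes into the marginal characteristic functionals, identifying $(W_1^{\mathbb{Q}},W_2^{\mathbb{Q}})$ as a standard bivariate Brownian motion and $(B_1^{H_1,\mathbb{Q}},B_2^{H_2,\mathbb{Q}})$ as an independent pair of fBms with the stated Hurst parameters.

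Finally, I would substitute $dW_{i,t}=dW_{i,t}^{\mathbb{Q}}+\gamma_i\,dt$ and $dB_{j,t}^{H_j}=dB_{j,t}^{H_j,\mathbb{Q}}+\eta_j\,dt$ directly into \eqref{int.mort.sde}; collecting the deterministic drift contributions yields \eqref{int.mort.qsde} with the claimed $m'_1=m_1+\alpha_1\sigma_1\gamma_1+\sigma_1\eta_1$ and $m'_2=m_2+\alpha_2\sigma_2\rho\gamma_1+\alpha_2\sigma_2\sqrt{1-\rho^2}\gamma_2+\sigma_2\eta_2$. The main technical obstacle is the verification step for the fractional piece: one must confirm that the constant-$\eta_j$ kernel $Q_{H_j}$ really does lie in $L^2([0,T],d\omega^{H_j})$ (so that the stochastic exponential is well defined and has mean one) and then argue that the joint law under the product measure is as claimed, since the Girsanov theorem for fBm is stated in the univariate setting and we must lift it to the bivariate/mixed setting via independence. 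The Brownian Girsanov step and the final algebraic substitution are routine.
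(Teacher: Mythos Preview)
Your approach is essentially the paper's: construct $\mathbb{Q}$ via a product Radon--Nikodym density exploiting the mutual independence of $W$, $B_1^{H_1}$, $B_2^{H_2}$, verify the laws under $\mathbb{Q}$, then substitute into \eqref{int.mort.sde}. The paper combines the two fractional factors into a single $Z_2(T)$ and checks the fBm property of $(B_1^{H_1,\mathbb{Q}},B_2^{H_2,\mathbb{Q}})$ by an explicit joint moment-generating-function computation using the identity ${\rm Cov}\big(B_{j,t}^{H_j},\int_0^T Q_{H_j}\,dM^{H_j}\big)=\eta_j t$ from \cite{kleptsyna2000}, whereas you invoke the univariate fractional Girsanov as a black box and lift by independence; both routes are equivalent.

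One concrete slip to fix: your signs are inverted. With $Z^W(T)=\exp(-\gamma_1 W_{1,T}-\gamma_2 W_{2,T}-\tfrac12(\gamma_1^2+\gamma_2^2)T)$, standard Girsanov makes $W_{i,t}+\gamma_i t$ (not $W_{i,t}-\gamma_i t$) a Brownian motion, contradicting \eqref{meas.q}; the paper correctly takes $Z_1(T)=\exp(+\gamma_1 W_{1,T}+\gamma_2 W_{2,T}-\tfrac12(\gamma_1^2+\gamma_2^2)T)$. Likewise, applying the Section~2.1 fractional Girsanov with $A(t)\equiv\eta_j$ yields $B_{j,t}^{H_j}+\eta_j t$ as the $\mathbb{Q}$-fBm, again the wrong sign; you need $A(t)\equiv-\eta_j$, or equivalently flip the sign on the stochastic integral in $Z^{H_j}(T)$ (as the paper's $Z_2$ does). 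After this correction your argument goes through.
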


\begin{proof}
Take  $$Q_{H_i}(t) = \frac{d}{d \omega_t^{H_i}} \int_0^t k_{H_i} (t, s) \eta_i ds, \ t \in [0, T],$$
where $$k_{H_i}(t, s) = \kappa_{H_i}^{-1} s^{\frac{1}{2} - H_i} (t-s)^{\frac{1}{2} - H_i}, \ \omega_t^{H_i} = \lambda_{H_i}^{-1} t^{2-2H_i}, \ i = 1, 2.$$
Define the fundamental Gaussian martingales $M_t^{H_i} = \int_0^t k_{H_i}(t, s) dB_s^{H_i}$ under probability measure $\mathbb{P}$. Clearly, the two martingales $M_t^{H_i}$ are mutually independent, and $\langle M^{H_i} \rangle_t = \omega_t^{H_i}$ (see \cite{kleptsyna2000} and the references therein). Then we can define the random variables 
$$Z_1(T) = \exp\left(\gamma_1 W_{1, T} + \gamma_2 W_{2, T} - \frac{1}{2}(\gamma_1^2 + \gamma_2^2) T\right),$$
$$Z_2(T) = \exp\left(\sum_{i=1}^2 \int_0^T Q_{H_i}(t) dM_t^{H_i} - \frac{1}{2} \sum_{i=1}^2 \int_0^T Q_{H_i}(t)^2 d\omega_t^{H_i}\right).$$
They are independent, and we have $\mathbb{E}^{\mathbb{P}}[Z_1(T)] = 1$ and $\mathbb{E}^{\mathbb{P}}[Z_2(T)] = 1$. The pricing measure $\mathbb{Q}$ can be defined by $\mathbb{Q} = Z_1(T)Z_2(T) \mathbb{P}$ where we take the Radon-Nikodym derivative as $Z_1(T)Z_2(T)$. 

Under this construction, we claim that $W^{\mathbb{Q}}(t) = (W^{\mathbb{Q}}_{1,t}, W^{\mathbb{Q}}_{2,t})$ defined in \eqref{meas.q} is a bivariate Brownian motion and $B_{j,t}^{H_j, \mathbb{Q}}$, for $j=1, 2$ defined in \eqref{meas.q} are independent fractional Brownian motions under the probability measure $\mathbb{Q}$. In fact, it suffices to show that the finite-dimensional distribution of $B^{H_j}$ with respect to the measure $\mathbb{Q}$ is fractional Brownian motion. We take the time points $t_1, \ldots, t_l$ for $B^{H_1}$ and $s_1, \ldots, s_k$ for $B^{H_2}$. Let $\Sigma_1 = {\rm Cov}(B^{H_1}_{1,t_i}, B^{H_1}_{1,t_j})_{i, j = 1, \ldots, l}$ and $\Sigma_2 = {\rm Cov}(B^{H_2}_{2,s_i}, B^{H_2}_{2,s_j})_{i, j = 1, \ldots, k}$. Then for any constants $\alpha_i$'s and $\beta_i$'s,
\begin{eqnarray}
    & & \mathbb{E}^{\mathbb{Q}}\exp\left(\sum_{i=1}^l \alpha_i B^{H_1, \mathbb{Q}}_{1,t_i} + \sum_{i=1}^k \beta_i B^{H_2, \mathbb{Q}}_{2,s_i}\right)  \notag\\
    & = & \mathbb{E}^{\mathbb{P}}\left[Z_1(T) \exp\left(\sum_{i=1}^l \alpha_i (B^{H_1}_{1,t_i} - \eta_1 t_i) + \sum_{i=1}^k \beta_i (B^{H_2}_{2,s_i} - \eta_2 s_i) +\sum_{i=1}^2 \int_0^T Q_{H_i}(t) dM_t^{H_i} \right.\right. \notag \\
    & & \left.\left. - \frac{1}{2} \sum_{i=1}^2 \int_0^T Q_{H_i}(t)^2 d\omega_t^{H_i}\right)\right] \notag \\
    & = & \mathbb{E}^{\mathbb{P}}\left[\exp\left(Z - \frac{1}{2} \sum_{i=1}^2 \int_0^T Q_{H_i}(t)^2 d\omega_t^{H_i} - \sum_{i=1}^l \alpha_i t_i \eta_1 - \sum_{i=1}^k \beta_i s_i \eta_2\right)\right],
    \label{rn.proof.eq1}
\end{eqnarray}
where $Z = \sum_{i=1}^l \alpha_i B^{H_1}_{1,t_i} + \sum_{i=1}^k \beta_i B^{H_2}_{2,s_i}+\sum_{i=1}^2 \int_0^T Q_{H_i}(t) dM_t^{H_i}$ is Gaussian with mean zero. The variance of $Z$ can be calculated in the following way:
\begin{equation}\label{rn.proof.eq2}
    \mathbb{E}[Z^2] = {\alpha} \Sigma_1 {\alpha'} + {\beta} \Sigma_2 {\beta'} + \sum_{i=1}^2 \int_0^T Q_{H_i}(t)^2 d\omega_t^{H_i} + 2 \sum_{i=1}^l \alpha_i \eta_1 t_i + 2 \sum_{i=1}^k \beta_i \eta_2 s_i,
\end{equation}
where we have used the facts (see Lemma 1 in \cite{kleptsyna2000})
  $${\rm Cov}\left(B^{H_1}_{1,t_i}, \int_0^T Q_{H_i}(t) dM_t^{H_i}\right) = t_i \eta_1,  \quad {\rm Cov}\left(B^{H_2}_{1,t_i}, \int_0^T Q_{H_i}(t) dM_t^{H_i}\right) = t_i \eta_2.$$
Now we use equations \eqref{rn.proof.eq1} and \eqref{rn.proof.eq2} to obtain
\begin{equation}
    \mathbb{E}^{\mathbb{Q}}\exp\left(\sum_{i=1}^l \alpha_i B^{H_1, \mathbb{Q}}_{1,t_i} + \sum_{i=1}^k \beta_i B^{H_2, \mathbb{Q}}_{2,s_i}\right) = \exp\left(\frac{1}{2}{\alpha} \Sigma_1 {\alpha'} + \frac{1}{2} {\beta} \Sigma_2 {\beta'}\right).
\end{equation}
This shows that $B^{H_1, \mathbb{Q}}$ and $B^{H_2, \mathbb{Q}}$ are independent fractional Brownian motions. For the Brownian motion case, one can follow the similar arguments as above, or choose to follow the proof lines in \cite{li2023pricing} and the reference therein. Finally, it would be straightforward to use equations in \eqref{meas.q} to show that the system of equations \eqref{int.mort.qsde} is valid.
\end{proof}

\section{Actuarial Valuation Framework} \label{sec:PricFram}

The analytic results on our proposed stochastic models discussed in the previous section can be applied to address the pricing challenges of many novel financial instruments that can be used in actuarial risk management. Namely, in this section we present the valuation results for zero-coupon bonds and mortality-linked securities. We explain how interest rate dynamics affect cash flow discounting and describe the design and actuarial valuation of mortality bonds that are useful to manage excess mortality risks.

\subsection{Valuation of zero-coupon bond}

Zero-coupon bonds play a crucial role in the design of mortality-linked securities by serving as instruments for discounting future cash flows. To model this, we assume that the dynamics of the interest rate follows the process given in \eqref{int.mort.qsde}. At time $t$, the value of a risk-free zero-coupon bond maturing at time $T$ is given by \footnote{We acknowledge that there are potential arbitrage opportunities in the fractional bond market as discussed in the current literature (see, for example, \cite{Cheridito2003}). These issues could be addressed by implementing appropriate restrictions on trading strategies. For further discussion, we refer the interested reader to \cite{czichowsky2017}, \cite{Guasoni2010}, \cite{jarrow09}, \cite{mishura2020} and the references therein for further details. Nevertheless, the market for mortality-linked securities, which is the focus of this paper, is generally regarded as incomplete.}
$$P(t, T) = \mathbb{E}^{\mathbb{Q}}\left[\exp\left(-\int_t^T r_s ds\right)\Big|\mathcal{F}_t\right].$$
To calculate the conditional expectation, we need an important result that provides an explicit formula for conditional expectation for a functional of fBm. This result is recalled in the following lemma \ref{lem.condexp}. For any appropriate function $f$, we first introduce an operator $(-\Delta)^\alpha$ defined as
$$(-\Delta)^\alpha f(x) := \frac{1}{2\Gamma(-2\alpha)\cos(-\alpha \pi)} \int_{-\infty}^\infty |x-t|^{-2\alpha-1} f(t) dt,$$
for $\alpha<0$, and the norm of $f$ on the Hilbert space $\mathfrak{H}$ is denoted by
      $$\|f\|_{\mathfrak{H}}^2 := \alpha_H \int_0^\infty \int_0^\infty f(u)f(v)|u-v|^{2H-2} dudv,$$
	where $\alpha_H = H(2H-1)$ and $H \in (\frac{1}{2}, 1)$. 
According to the results given in \cite{gn96} and Theorem 3.2 in \cite{hu01}, the conditional expectation of a functional of fBm with respect to the filtration generated by the fBm, can be represented as a stochastic integral with respect to the same fBm. 
\begin{lemma}\label{lem.condexp}
   Suppose a given function $g$ satisfies necessary conditions such that the stochastic integral $\int_0^\infty g(s) dB^H(s)$ is well defined, where $B^H$ is a fBm with Hurst parameter $H$. Then there exists a function $h$ with support on $[0, t]$, such that 
    \begin{equation*}
		\mathbb{E}\left[\int_0^\infty g(s) dB^H(s) \Big| \mathcal{F}_t\right] = \int_0^t h(s) dB^H(s),
    \end{equation*}
and the function $h$ satisfies
    \begin{equation}\label{h.def}
	(-\Delta)^{\frac{1}{2}-H} g(x) = (-\Delta)^{\frac{1}{2}-H} h(x),
    \end{equation}
for any $x \in (0, t)$. Moreover,
    $$\mathbb{E}\left[\exp\left(\int_0^\infty g(s) dB^H(s) \right)\Big| \mathcal{F}_t\right] = \exp\left(\int_0^t h(s) dB^H(s) - \frac{1}{2}\|h_{[0, t]}\|_{\mathfrak{H}}^2 + \frac{1}{2}\|g\|^2_{\mathfrak{H}}\right).$$
\end{lemma}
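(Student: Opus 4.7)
The plan is to work inside the first Wiener chaos of $B^H$, which is a Gaussian Hilbert space isometric (via $\mathbbm{1}_{[0,s]} \mapsto B^H_s$) to the closure $\mathfrak{H}$ of step functions under the inner product $\langle f,g \rangle_{\mathfrak{H}} = \alpha_H \int_0^\infty \int_0^\infty f(u)g(v)|u-v|^{2H-2}\,du\,dv$. Since conditional expectation of a jointly Gaussian family is the $L^2(\mathbb{P})$-orthogonal projection onto the Gaussian subspace generated by $\{B^H_s : s \le t\}$, on the $\mathfrak{H}$-side it reduces to the $\mathfrak{H}$-orthogonal projection onto functions supported in $[0,t]$. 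Existence of a function $h$ with support in $[0,t]$ such that $\mathbb{E}[\int_0^\infty g\,dB^H \mid \mathcal{F}_t] = \int_0^t h\,dB^H$ is therefore immediate from the Hilbert-space projection theorem, and only the characterization of $h$ remains.

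To obtain \eqref{h.def}, I would impose orthogonality of $g - h$ (with $h$ extended by zero outside $[0,t]$) against every test function $\phi$ supported on $[0,t]$. By the covariance formula \eqref{cov.formula} this reads
\begin{equation*}
\int_0^t \phi(v) \left( \int_0^\infty g(u)|u-v|^{2H-2}\,du - \int_0^t h(u)|u-v|^{2H-2}\,du \right) dv = 0.
\end{equation*}
Since $\phi$ is arbitrary, the bracket vanishes for a.e. $v \in (0,t)$. Recognizing the singular integral defining $(-\Delta)^{\frac{1}{2}-H}$ as, up to the normalization constant arising from $\alpha = \frac{1}{2} - H$, convolution against $|x-\cdot|^{2H-2}$, this is exactly the equation \eqref{h.def} on $(0,t)$ after extending both $g$ and $h$ by zero outside their supports.

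For the exponential identity, set $X = \int_0^\infty g\,dB^H$ and $Y = \int_0^t h\,dB^H$. By construction $X - Y$ is Gaussian and $L^2(\mathbb{P})$-orthogonal to every element of the first chaos generated by $\{B^H_s : s \le t\}$; since orthogonality of jointly Gaussian variables forces independence, $X - Y$ is independent of $\mathcal{F}_t$. Hence
\begin{equation*}
\mathbb{E}\!\left[ e^{X} \,\big|\, \mathcal{F}_t \right] \;=\; e^Y \cdot \mathbb{E}\!\left[ e^{X-Y} \right] \;=\; \exp\!\left( Y + \tfrac{1}{2}\, \mathrm{Var}(X-Y) \right),
\end{equation*}
and the Pythagorean identity $\mathrm{Var}(X-Y) = \|g\|^2_{\mathfrak{H}} - \|h_{[0,t]}\|^2_{\mathfrak{H}}$ produces the stated formula.

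The main obstacle is the analytic regularity required to interpret $(-\Delta)^{\frac{1}{2}-H}$ on the half-line and to justify the Fubini exchange leading to the orthogonality equation: although the kernel $|u-v|^{2H-2}$ is locally integrable for $H > \frac{1}{2}$, the integrals extending to infinity demand enough decay or weighting on $g$ to keep everything finite and to ensure that the projection $h$ lives in an admissible class. This is precisely the functional-analytic framework worked out in \cite{gn96} and Theorem 3.2 of \cite{hu01}, which I would invoke rather than redevelop.
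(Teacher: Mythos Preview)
Your sketch is correct and is essentially the standard argument underlying the cited references: projection in the first Wiener chaos identifies $h$, orthogonality against test functions supported on $[0,t]$ yields the Riesz-potential equation \eqref{h.def}, and Gaussian independence of $X-Y$ from $\mathcal{F}_t$ gives the exponential formula via the Pythagorean identity. The paper itself does not supply a proof of Lemma~\ref{lem.condexp}; it simply recalls the result from \cite{gn96} and Theorem~3.2 of \cite{hu01}, so there is no separate ``paper's approach'' to compare against, and your plan to invoke those references for the analytic regularity is exactly what the authors do.
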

	
Applying the above result, we can derive an explicit formula for the bond values. This is stated in the following theorem.

\begin{theorem}\label{risk.free.bond}
   Consider a risk-free zero-coupon bond issued at time $0$ and matured at time $T$, whose interest rate $r_t$ under measure $\mathbb{Q}$ is described by the following SDE driven by a mfBm $B_t = \alpha W_t + B_t^H$:
   $$dr_t = (m - \theta r_t) dt + \sigma dB_t, \ t \in [0, T].$$ The bond value at time 0 can be determined by
    \begin{eqnarray}
	P(0, T) & = & \exp\left(-\frac{m}{\theta}T + \left(\frac{m}{\theta} - r_0\right)\frac{1-e^{-\theta T}}{\theta}\right) \times \exp \left(\frac{\alpha^2 \sigma^2}{2\theta^2} \left(-\frac{1}{2\theta} e^{-2\theta T} + \frac{2}{\theta}e^{-\theta T} + T - \frac{3}{2\theta}\right)\right) \notag\\
	& & \times \exp \Big(\frac{\sigma^2 \alpha_H}{\theta^2} \Big(-\frac{e^{-2\theta T}}{2\theta} \int_0^T e^{\theta x} x^{2H-2} dx + \frac{1}{2\theta}\int_0^T e^{-\theta x} x^{2H-2} dx \notag\\
	& & \ - \frac{1}{2H-1} \int_0^T e^{-\theta x} x^{2H-1} dx - \frac{e^{-\theta T}}{2H-1} \int_0^T e^{\theta x}x^{2H-1} dx + \frac{T^{2H}}{(2H)(2H-1)} \Big) \Big). \label{bond.p0T}
    \end{eqnarray}
    The bond value at time $t$ can be computed as
	  \begin{eqnarray}
		  P(t, T) & = & \exp\left(\int_0^t r_s ds -\frac{mT}{\theta} + \frac{1}{\theta}\left(\frac{m}{\theta} - r_0\right)\left(1 - e^{-\theta T}\right)\right) \times \exp\left(-\frac{\alpha \sigma}{\theta} \int_0^t 1 - e^{-\theta(T-u)} dW_u\right) \notag\\
		  &&  \times \exp\left(\frac{\alpha^2 \sigma^2}{2\theta^2} \left((T-t) - \frac{e^{-2\theta(T-t)} - 4 e^{-\theta(T-t)} + 3}{2\theta}\right)\right) \notag\\
		  && \times \exp\left(\int_0^t h(s) dB_s^H - \frac{1}{2}\|h_{[0, t]}\|_{\mathfrak{H}}^2 + \frac{1}{2}\|g\|^2_{\mathfrak{H}}\right), \label{bond.ptT}
	  \end{eqnarray}
    where the function $h$ satisfies the equation \eqref{h.def} associated with the function $g$ that is defined by
	$$g(u) = -\frac{\sigma}{\theta}\mathbbm{1}_{[0, T]} \left(1 - e^{-\theta(T-u)}\right).$$
\end{theorem}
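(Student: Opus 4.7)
The overall strategy is to reduce the bond valuation to the computation of moment generating functions of Gaussian random variables. The key observation is that $r_t$ is driven by an mfBm, so the pathwise solution to the Vasicek-type SDE is an affine function of Wiener integrals with respect to $W$ and $B^H$; consequently $\int_0^T r_s ds$ is Gaussian, and $P(0,T)=\mathbb{E}^{\mathbb{Q}}[\exp(-\int_0^T r_s ds)]$ becomes $\exp(-\mu+\tfrac{1}{2}v^2)$ for its mean $\mu$ and variance $v^2$. For $P(t,T)$ we separate the $\mathcal{F}_t$-measurable part from the residual stochastic integral, handle the Brownian piece by independent-increments, and invoke Lemma~\ref{lem.condexp} for the fBm piece.

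For $P(0,T)$, I would first write down the explicit pathwise solution
$r_s=\tfrac{m}{\theta}+e^{-\theta s}(r_0-\tfrac{m}{\theta})+\sigma\int_0^s e^{-\theta(s-u)}\,dB_u$,
then apply the stochastic Fubini theorem to obtain
$\int_0^T r_s\,ds=\tfrac{m}{\theta}T+(r_0-\tfrac{m}{\theta})\tfrac{1-e^{-\theta T}}{\theta}-\int_0^T g(u)\,dB_u$,
where $g(u)=-\tfrac{\sigma}{\theta}\mathbbm{1}_{[0,T]}(u)\bigl(1-e^{-\theta(T-u)}\bigr)$. Since $B_u=\alpha W_u+B_u^H$ with independent components, the variance of the Wiener integral splits into a Brownian part $\alpha^2\int_0^T g(u)^2\,du$ and a fractional part $\alpha_H\int_0^T\!\!\int_0^T g(u)g(v)|u-v|^{2H-2}du\,dv$ using \eqref{var.formula}. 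Computing $\exp(-\mu+\tfrac{1}{2}v^2)$ then gives \eqref{bond.p0T}: the deterministic constants generate the first exponential factor, the Brownian variance term produces the second factor after the substitution $x=T-u$ and elementary integration, and the fractional variance term, after expanding $(1-e^{-\theta(T-u)})(1-e^{-\theta(T-v)})$ into four pieces and carrying out the resulting double integrals via the change of variables $x=u-v$, $y=v$ (or $y=u$) and reduction to the incomplete gamma functions $\gamma$ and $\gamma^*$, yields the third factor.

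For $P(t,T)$, I would exploit the identity
$\int_t^T r_s ds=\int_0^T r_s ds-\int_0^t r_s ds$
to write $P(t,T)=\exp(\int_0^t r_s\,ds)\,\mathbb{E}^{\mathbb{Q}}\!\bigl[\exp\bigl(-\int_0^T r_s ds\bigr)\big|\mathcal{F}_t\bigr]$. Plugging in the explicit form of $\int_0^T r_s ds$ from the previous step, the deterministic prefactor comes out directly. For the Brownian contribution $\alpha\int_0^T g(u)\,dW_u$, I would split at $t$: the integral over $[0,t]$ is $\mathcal{F}_t$-measurable and produces the factor $\exp\bigl(-\tfrac{\alpha\sigma}{\theta}\int_0^t(1-e^{-\theta(T-u)})\,dW_u\bigr)$, while the integral over $[t,T]$ is independent Gaussian, so taking its Laplace transform and reducing $\frac{\alpha^2\sigma^2}{2\theta^2}\int_t^T(1-e^{-\theta(T-u)})^2\,du$ via $x=T-u$ reproduces the third factor in \eqref{bond.ptT}. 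For the fBm contribution $\int_0^T g(u)\,dB_u^H$, Lemma~\ref{lem.condexp} applies directly with this $g$, producing the last factor $\exp\bigl(\int_0^t h\,dB_s^H-\tfrac{1}{2}\|h_{[0,t]}\|_{\mathfrak{H}}^2+\tfrac{1}{2}\|g\|_{\mathfrak{H}}^2\bigr)$.

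The main obstacle is the closed-form evaluation of the fractional double integral appearing in the variance calculation for $P(0,T)$. After expanding the integrand into four terms and using the symmetry in $u,v$, each piece reduces to an integral of the form $\int_0^T e^{\pm\theta x}x^{2H-2}dx$ or $\int_0^T e^{\pm\theta x}x^{2H-1}dx$ after a change of variables, plus a contribution from the plain $|u-v|^{2H-2}$ term that yields $\tfrac{T^{2H}}{H(2H-1)}$. Matching the resulting combination with the specific grouping in \eqref{bond.p0T} requires careful bookkeeping of the prefactors $\alpha_H/\theta^2$ versus $1/(2H-1)$; this is the only step beyond routine application of Fubini, Itô isometry, and Lemma~\ref{lem.condexp}. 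The $P(t,T)$ part is otherwise a direct consequence of the conditioning argument, with no new integrals to evaluate since the fBm conditional expectation is expressed implicitly in terms of $h$ and $\|\cdot\|_{\mathfrak{H}}$.
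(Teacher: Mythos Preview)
Your proposal is correct and follows essentially the same route as the paper: pathwise solution of the Vasicek SDE, stochastic Fubini to rewrite $\int_0^T r_s\,ds$ as a Wiener integral of $g$, separation of the independent $W$ and $B^H$ contributions with It\^o isometry and \eqref{var.formula}, and for $P(t,T)$ the split at $t$ for the Brownian piece together with Lemma~\ref{lem.condexp} for the fBm piece. The paper likewise leaves the reduction of the fractional double integral to the explicit one-dimensional integrals in \eqref{bond.p0T} as a direct computation, so your identification of that bookkeeping as the only nontrivial step matches the paper's treatment.
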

	
\begin{proof}
For a given initial condition $r_0$, the interest rate process $r_t$ admits the following pathwise solution 
    \begin{equation}\label{rt.sol} 
	r_t = \frac{m}{\theta} + e^{-\theta t}\left(r_0 - \frac{m}{\theta}\right) + \sigma \int_0^t e^{-\theta(t-s)} dB_s.
    \end{equation}
The price of a risk-free zero-coupon bond can be computed as
    \begin{eqnarray*}
	P(0, T) = \mathbb{E}\left[e^{-\int_0^T r_s ds}\right] = \exp\left(-\frac{m}{\theta}T + \left(\frac{m}{\theta} - r_0\right)\frac{1-e^{-\theta T}}{\theta}\right) \times \mathbb{E}\left[e^{- \sigma \int_0^T \int_0^s e^{-\theta(s-u)} dB_u ds}\right].
    \end{eqnarray*}
The computation of the expectation of the stochastic integral is similar to the proof of Lemma 2.4 in \cite{zhou2022}. We can apply stochastic Fubini Theorem, It\^o isometry and the variance formula \eqref{var.formula} to obtain
  \begin{eqnarray*}
      \mathbb{E}\left[e^{- \sigma \int_0^T \int_0^s e^{-\theta(s-u)} dB_u ds}\right] & = & \mathbb{E}\left[e^{-\sigma \int_0^T \frac{1-e^{-\theta(T-u)}}{\theta} dB_u} \right] \\
       & = & \mathbb{E}\left[e^{-\frac{\alpha \sigma}{\theta} \int_0^T \left(1-e^{-\theta(T-u)}\right) dW_u} \right] \times \mathbb{E}\left[e^{-\frac{\sigma}{\theta} \int_0^T \left(1-e^{-\theta(T-u)}\right) dB_u^H} \right] \\
       & = & \exp\left(\frac{\alpha^2\sigma^2}{2\theta^2}\int_0^T \left(1-e^{-\theta(T-u)}\right)^2 du\right) \times \exp \left(\frac{\sigma^2\alpha_H}{2\theta^2} \int_0^T \int_0^T  \right. \\
       & & \left.\left(1-e^{-\theta(T-u)}\right)\left(1-e^{-\theta(T-v)}\right) |u-v|^{2H-2} dudv \right).
  \end{eqnarray*}
Direct evaluation of the above integrals yields equation~\eqref{bond.p0T}. 

The bond value at time $t$ is evaluated as
  \begin{eqnarray*}
	P(t, T) = \mathbb{E}\left[e^{-\int_r^T r_s ds}\big|\mathcal{F}_t\right],
  \end{eqnarray*}
where $\mathcal{F}_t$ is the filtration generated by the mfBm. Plugging the equation \eqref{rt.sol} into the above equation, we obtain
	\begin{eqnarray*}
		P(t, T) & = & \exp\left(\int_0^t r_s ds\right) \times \exp\left(-\frac{mT}{\theta} + \frac{1}{\theta}\left( \frac{m}{\theta} - r_0\right)\left(1 - e^{-\theta T}\right)\right) \\
		& & \times \ \mathbb{E}\left(\exp\left(-\sigma \int_0^T \int_0^s e^{-\theta(s-u)} dB_u ds \right) \Big\vert \mathcal{F}_t\right) \\
		& =: & I_1 \times I_2 \times I_3.
	\end{eqnarray*}
By stochastic Fubini Theorem, the term $I_3$ can be further computed as
	\begin{eqnarray*}
		I_3 & = & \mathbb{E}\left(\exp\left(-\sigma \int_0^T \frac{e^{-\theta u} - e^{-\theta T}}{\theta} e^{\theta u} dB_u\right) \Big\vert \mathcal{F}_t\right) \\ 
		& = & \mathbb{E}\left(\exp\left(-\frac{\alpha \sigma}{\theta} \int_0^T 1 - e^{-\theta(T-u)} dW_u\right)\Big\vert \mathcal{F}_t\right) \times \mathbb{E}\left(\exp\left(-\frac{\sigma}{\theta} \int_0^T 1 - e^{-\theta(T-u)} dB_u^H \right)\Big\vert \mathcal{F}_t\right) \\
		& =:& I_{31} \times I_{32}.
	\end{eqnarray*}
For the term $I_{31}$, we have

    \begin{eqnarray*}
		I_{31} & = & \exp\left(-\frac{\alpha \sigma}{\theta} \int_0^t 1 - e^{-\theta(T-u)} dW_u\right) \times \mathbb{E}\left(\exp\left(-\frac{\alpha \sigma}{\theta} \int_t^T 1 - e^{-\theta(T-u)} dW_u\right) \right) \\
          & = & \exp\left(-\frac{\alpha \sigma}{\theta} \int_0^t 1 - e^{-\theta(T-u)} dW_u\right) \times \exp\left(\frac{\alpha^2 \sigma^2}{2\theta^2} \int_t^T \left(1 - e^{-\theta(T-u)}\right)^2 du\right) \\
	   & = & \exp\left(-\frac{\alpha \sigma}{\theta} \int_0^t 1 - e^{-\theta(T-u)} dW_u\right) \times \exp\left(\frac{\alpha^2 \sigma^2}{2\theta^2} \left((T-t) - \frac{e^{-2\theta(T-t)} - 4 e^{-\theta(T-t)} + 3}{2\theta}\right)\right).
    \end{eqnarray*}
We denote $g(u):= -\frac{\sigma}{\theta} \left(1 - e^{-\theta(T-u)}\right) \mathbbm{1}_{[0, T]}(u)$ and the term $I_{32}$ can be written as 
	$$I_{32} = \mathbb{E}\left(\exp\left(\int_0^\infty g(u) dB_u^H\right)\Big\vert \mathcal{F}_t\right).$$
By Lemma \ref{lem.condexp}, we obtain the result \eqref{bond.ptT} and complete the proof.
\end{proof}

We provide a numerical illustration in Figure~\ref{fig:zcb} to show how the value of a zero-coupon bond at time 0 is influenced by different model parameters. Figure~\ref{fig:zcb} shows that the bond value does not exhibit a clear exponential decay as the bond duration increases, since the interest rate is modeled by a mean-reverting process instead of a constant. Under different volatility levels ($\sigma = 0.5$ in the left panel and $\sigma = 1$ in the right panel), the effect of the Hurst parameter $H$ on bond values is consistent across bond durations. In general, a larger Hurst parameter leads to a higher bond value. Moreover, comparing the two panels, we can see that as the volatility parameter $\sigma$ increases, the influence of the Hurst parameter on the bond values becomes more pronounced.

\begin{figure}[!ht]
    \centering
    \begin{subfigure}[b]{0.49\textwidth}
      \centering
      \includegraphics[width=\textwidth]{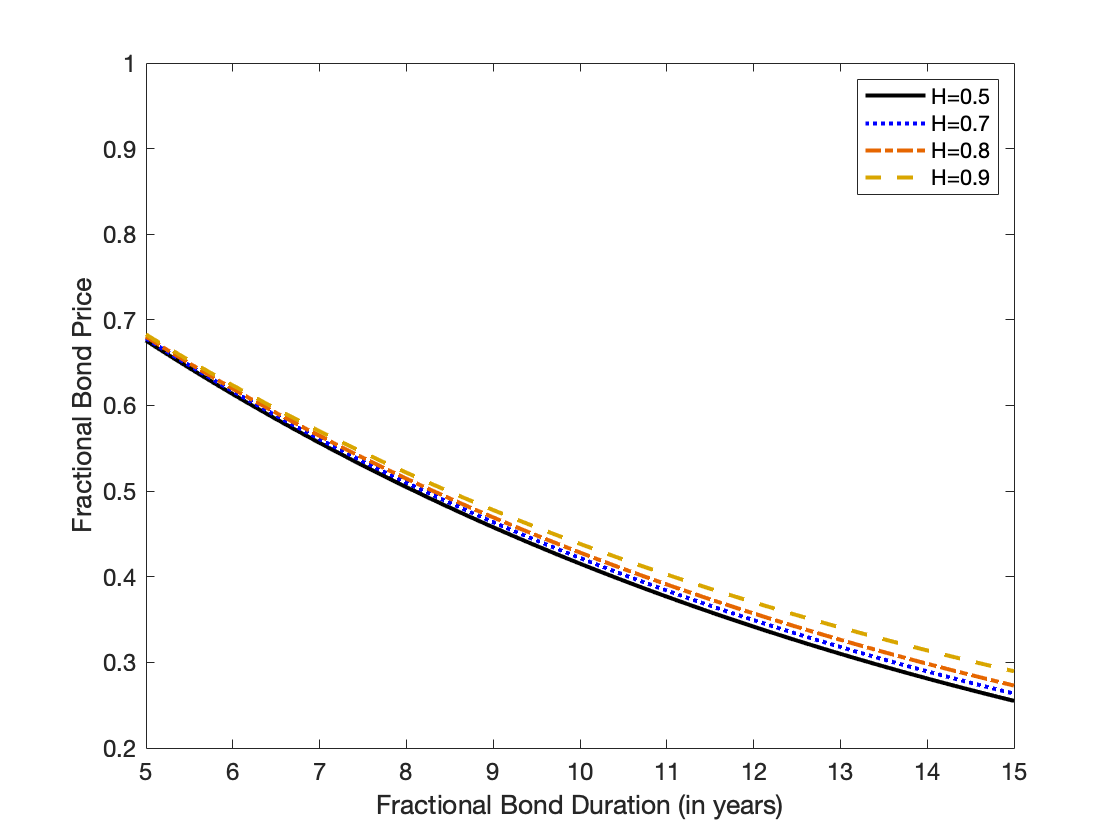}
      \caption{$\sigma = 0.5$}
    \end{subfigure}
    \begin{subfigure}[b]{0.49\textwidth}
      \centering
      \includegraphics[width=\textwidth]{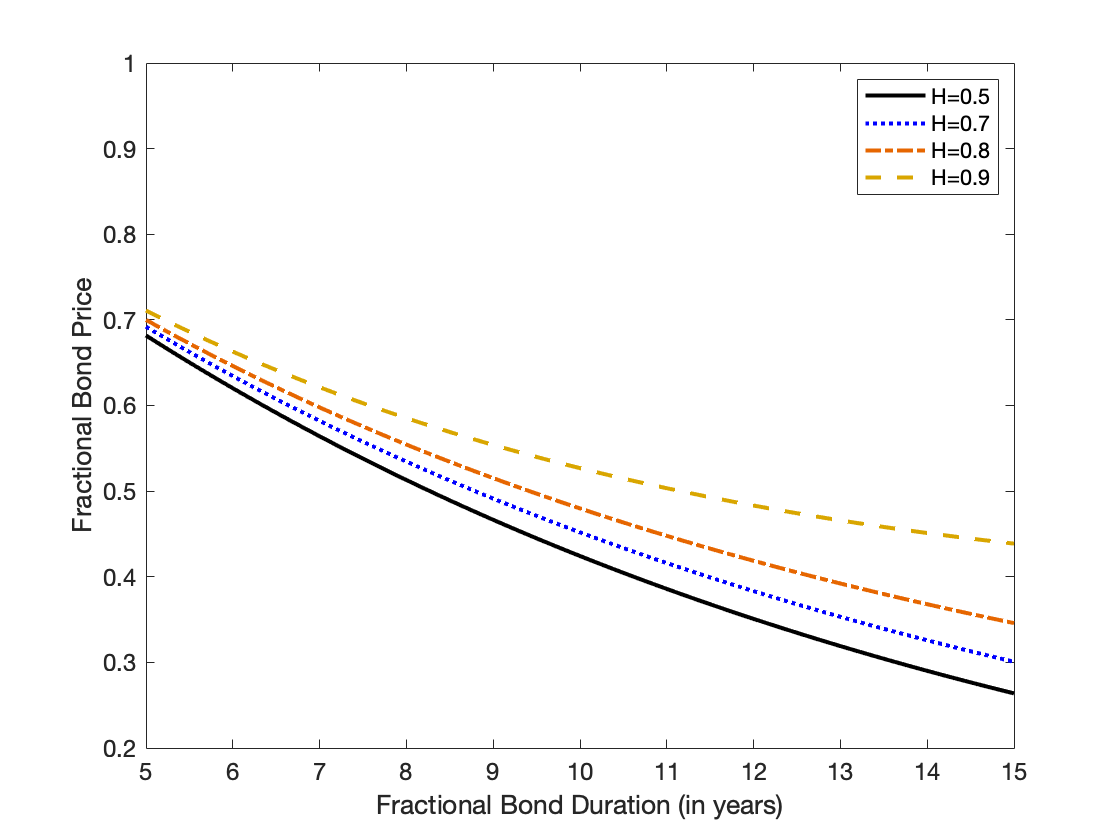}
      \caption{$\sigma = 1$}
    \end{subfigure}\caption{An illustration of zero-coupon bond value at time 0 under different values of $H$ and $\sigma$, where $\alpha = 0$, $m = 10$, and $\theta = 1$.}
    \label{fig:zcb}
\end{figure}

\subsection{Pricing mortality-linked securities}
\label{MLS.pricing}

Mortality bonds are a type of mortality-linked security (MLS) that enables insurance companies and pension plans to transfer catastrophic mortality risk to capital markets. For example, Swiss Re introduced Vita Capital bonds in 2003 to hedge against increased mortality rates in the United States, United Kingdom, Canada, and Germany. The payoff structure of these bonds is tied to a specific mortality index, such as a national death rate. If this index exceeds a predetermined threshold, the principal repayment of the bond is reduced proportionally, allowing issuers to mitigate losses from unexpected mortality events such as pandemics or natural disasters. In return, investors accept this risk in exchange for yields that are typically higher than those offered by standard bonds.

Consider an MLS issued at time $0$ with scheduled payments at $t_k \in \{t_1, t_2, \ldots, t_K\}$, where $t_K = T$ is the maturity time. At each time point $t_k$, the bond payment $\phi_{t_k}$ depends on a mortality-linked quantity. It is assumed that for all $k = 1, 2, \ldots, K$, the payment function $\phi_{t_k}$ is measurable with respect to the augmented filtration $\mathcal{F}_{t_k}^r$ generated by the stochastic processes $\{W_{i,u},\ 0 < u < t_k\}$ and $\{B^{H_i}_{i,u},\ 0 < u < t_k\}$ for $i = 1, 2$. 

For a mortality bond structured similarly to Vita bonds, assume the face value is $F$ and the annual coupon rate is $c$, payable $m$ times per year. The payment $\phi_{t}$ at times $t \in \{t_1, t_2, \ldots, t_K\}$ is given by
\begin{align}\label{mort.bond.payment}
\phi_{t} = \begin{cases} \frac{cF}{m} & \mbox{if } t=t_1, ..., t_{K-1} \\
\frac{cF}{m} + F(1-{\rm PRF}) & \mbox{if } t=t_K \end{cases},
\end{align}
where ${\rm PRF}$ is the principal reduction factor that determines the proportion of the principal repaid to investors at maturity. The PRF depends on the realized value of a mortality index $\tilde{\mu}_t$ for $t \in \{t_1, t_2, \ldots, t_K\}$, and is defined as
$$
{\rm PRF}(\tilde{\mu}) = \min\left(1, \; \sum_{t= t_1}^{t_K}\frac{(\tilde{\mu}_{t} - a)_+ - (\tilde{\mu}_{t} - b)_+}{b - a} \right),
$$
where $a$ is the attachment point and $b$ is the exhaustion point. If $\tilde{\mu}_t < a$ for all $t$, then ${\rm PRF} = 0$ and the principal is fully repaid. In contrast, if $\tilde{\mu}_t > b$ for any $t$, then ${\rm PRF} = 1$, meaning that the principal is entirely lost.

Regarding the mortality index linked to the bond, we present the following considerations. First, it should be designed to accurately reflect the mortality experience of the population over the term of the bond. Several specifications for $\tilde{\mu}_t$ can be considered:
\begin{itemize}
    \item $\tilde{\mu}_t$ is set to the mortality rate observed at time $t$,
    \item $\tilde{\mu}_t$ is set to the average mortality rate observed between times $t-1$ and $t$, or
    \item $\tilde{\mu}_t$ is set to the maximum mortality rate observed between times $t-1$ and $t$.
\end{itemize}

Second, the choice of stochastic model plays an important role in determining the mortality index, which is an essential part in the design of the MLS. In this paper, we assume that the mortality rate follows the SDE \eqref{int.mort.sde}, which highlights the LRD property observed in human mortality data. Existing literature such as \cite{zhou2022} has studied how the Hurst parameter affects the distribution of lifetime related random variables. Since our focus is on mortality bonds, it is also important to understand how extreme values of mortality respond to changes in the Hurst parameter. Figure~\ref{fig:mort.model} illustrates the impact of Hurst parameter $H_2$ on the maximum mortality rate over the bond term. A higher value of $H_2$ leads to a higher exceedance probability when the threshold mortality rate is set high. This suggests that stronger LRD would increase the likelihood of extreme mortality rates, which further affects the design and pricing of mortality bonds.

\begin{figure}[!ht]
       \centering
	   \begin{subfigure}[b]{0.49\textwidth}
          \includegraphics[width=\textwidth]{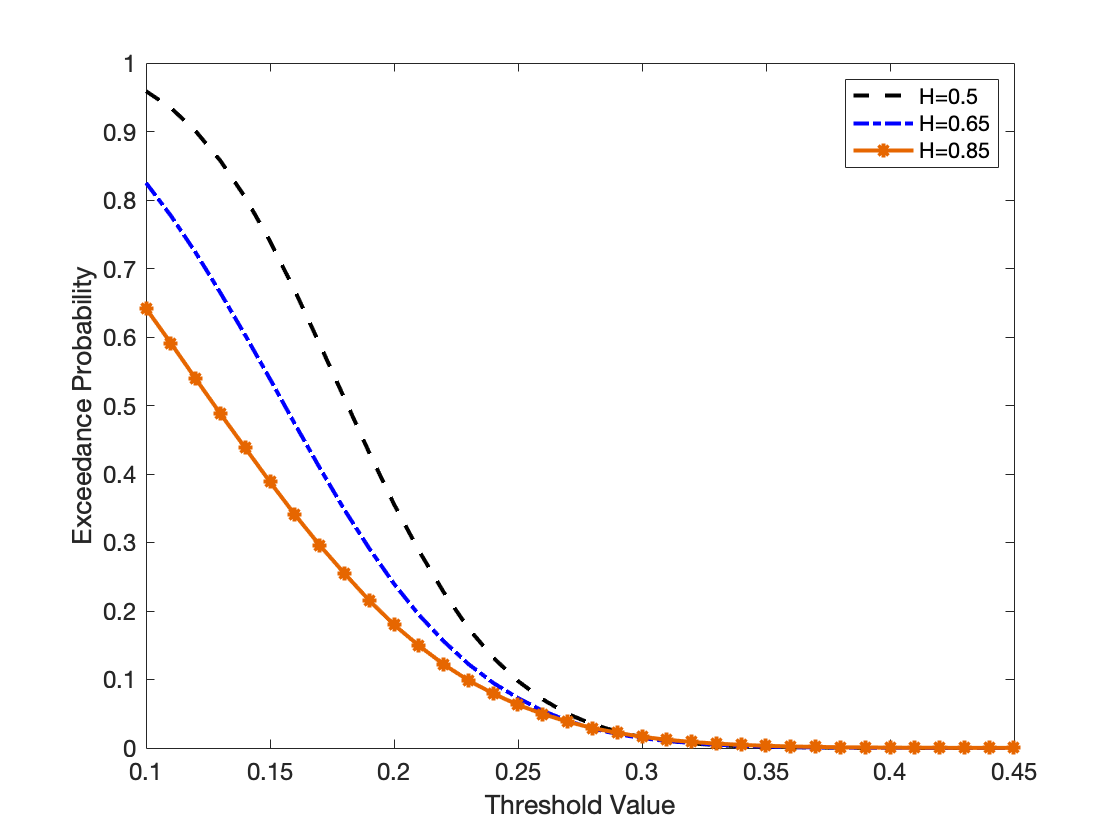}
          \caption{Threshold value: 0.1 to 0.45}
	   \end{subfigure}
	   \begin{subfigure}[b]{0.49\textwidth}
		   \includegraphics[width=\textwidth]{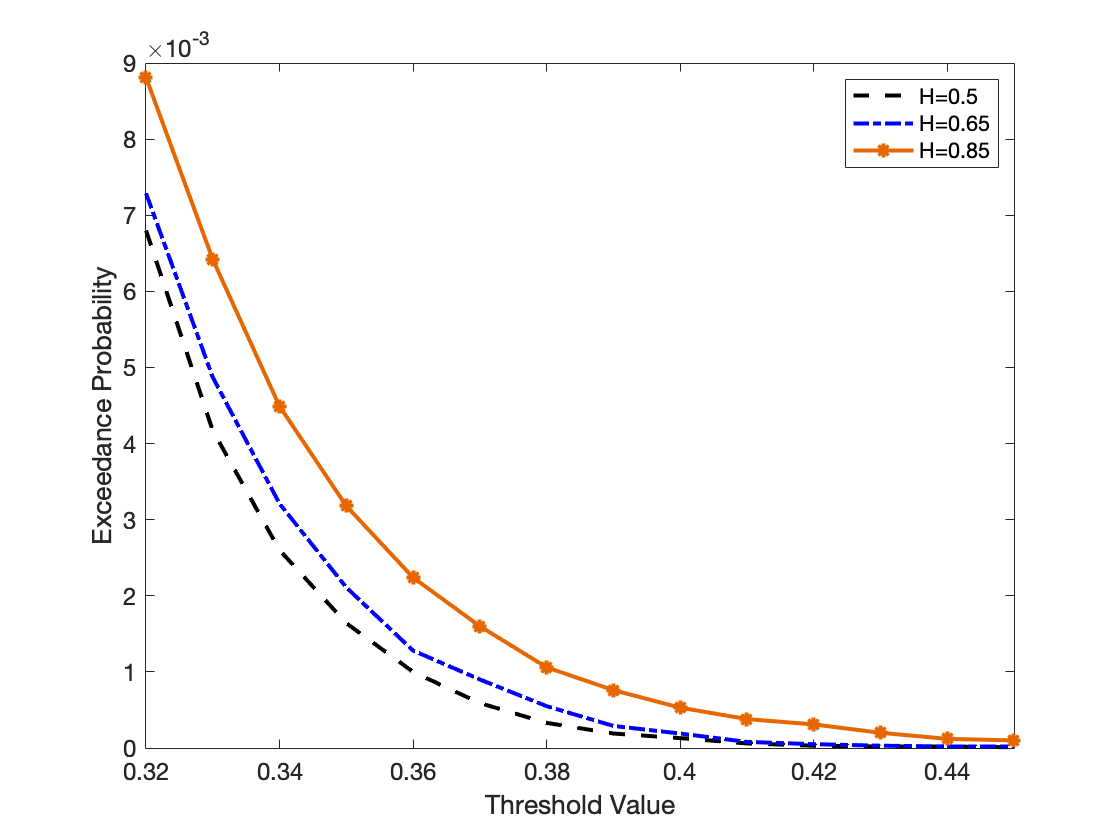}
          \caption{Threshold value: 0.32 to 0.45}
	   \end{subfigure}
       \caption{Probability that the maximum excess mortality $\tilde\mu = \max_{0 \leq t \leq T} \mu_t$ exceeds a given threshold under different values of $H_2$. Results are based on $10^5$ simulated trajectories of $\mu_t$ from the SDE with parameters $\alpha = 0$, $\sigma = 0.116724$, $m = 0.008818679$, $\theta = 1.029601$, and $\mu_0 = -0.085646188$.}
       \label{fig:mort.model}
\end{figure}

Based on the payout structure given in equation \eqref{mort.bond.payment} , the time-$t$ value of future bond payments can be expressed as:
\begin{align*}
P_t^{\rm MB} &= \mathbb{E}^{\mathbb{Q}}\left[ \sum_{t < t_k \leq T} \phi_{t_k} e^{-\int_t^{t_k} r_s ds}  \Big|\mathcal{F}_t^r \right]\\
&= \sum_{t < t_k \leq T}  \frac{cF}{m} \mathbb{E}^{\mathbb{Q}}\left[ e^{-\int_t^{t_k} r_s ds}  \Big|\mathcal{F}_t^r \right] + \mathbb{E}^{\mathbb{Q}}\left[ F(1-{\rm PRF(\tilde \mu)}) e^{-\int_t^{T} r_s ds}  \Big|\mathcal{F}_t^r \right] \\ 
&= \sum_{t < t_k \leq T}  \frac{cF}{m} P(t, t_k) + F P(t, T) - F\mathbb{E}^{\mathbb{Q}}\left[ {\rm PRF}(\tilde \mu) \times e^{-\int_t^T r_s ds}  \Big|\mathcal{F}_t^r \right],
\end{align*}
where $P(t, t_k)$, for $t < t_k \leq T$, represents the time-$t$ price of a risk-free zero-coupon bond maturing at time $t_k$. The explicit formula for $P(t, t_k)$ has been derived in Theorem~\ref{risk.free.bond}.

Finally, the price of the mortality bond at time $t = 0$, with an annual coupon rate of $c$ payable $m$ times per year, is given by
\begin{equation}\label{price.mort.bond}
P_0^{\rm MB}(c) = \frac{cF}{m} \sum_{t_k=t_1}^{t_K} P(0, t_k) + F P(0, T) - F \mathbb{E}^{\mathbb{Q}}\left[ {\rm PRF}(\tilde{\mu}) \times e^{-\int_0^T r_s ds} \, \Big| \mathcal{F}_0^r \right],
\end{equation}
where the term $\mathbb{E}^{\mathbb{Q}}\left[ {\rm PRF}(\tilde{\mu}) \times e^{-\int_0^T r_s ds} \Big| \mathcal{F}_0^r \right]$ represents the expected value of the discounted principal reduction and can be evaluated numerically using simulations.

\section{Parameter estimation} \label{sec:ParaEst}

To correctly apply the above analysis results on the MLS, it is important to calibrate model parameters in both the physical measure and the pricing measure. As the mortality-linked insurance market is believed to be incomplete, we first describe a data-driven calibration process under real-world scenarios based on the asymptotic results of the SDE. Then we propose a Monte Carlo simulation algorithm combined with an optimization procedure to obtain the risk premium parameters based on the limited price information regarding a catastrophe mortality bond available in the market.

\subsection{Parameter estimation of a general SDE driven by mfBm}\label{pe.sde}

In this subsection, we consider the parameter estimation of a general SDE driven by mfBm, 
   $$d X_t  =  (m - \theta X_t) dt + \sigma (\alpha dW_t + d B_t^H),$$
where $X_0 = x_0$. Suppose that we have one observed trajectory of $X_t$ at equidistant time points $t=0, \frac{1}{n}, \frac{2}{n}, \ldots, \frac{[nT]}{n}$. The parameters $m, \theta, \sigma, \alpha, H$ are unknown. 

We employ the well-known rescaled range (R/S) analysis (see, for example, \cite{andrew91}, \cite{mandelbrot72}, \cite{mandelbrot69},  \cite{xiao23}) to estimate the Hurst parameter $H$. Namely, for a long memory series of length $N$, $\left\{X_{\frac{i}{n}}, i=0, 1, \ldots, N\right\}$, the rescaled range can be calculated as follows:
    $$Y_j = X_{\frac{j}{n}} - \frac{1}{N} \sum_{i=1}^N X_{\frac{j}{n}}, \ j=1, 2, \ldots, N; \ Z_k = \sum_{j=1}^k Y_j, k = 1, 2, \ldots, N.$$
The range $R$ and rescaled range $R/S(N)$ can be computed by
    $$R(N) = \max(Z_1, \ldots, Z_N) - \min(Z_1, \ldots, Z_N), \ R/S(N) = \frac{R(N)}{\sqrt{\frac{1}{N}\sum_{j=1}^N Y_j^2}}.$$
Then the Hurst parameter $H$ can be estimated as 0.5 plus the slope of $R/S(N) - \mathbb{E}[R/S(N)]$ against $N$. The formula for approximating the expectation of $R/S(N)$ can be found in (3.9) of \cite{annis76}.

To estimate the volatility parameter $\sigma$ and weight parameter $\alpha$, we define the power variations of the stochastic process $X = \{X_t, t \geq 0\}$ as
   $$V^n(X)_t = \sum_{i=1}^{[nt]}\left|\Delta X_{\frac{i-1}{n}}\right|^2 = \sum_{i=1}^{[nt]}\left|X_{\frac{i}{n}} - X_{\frac{i-1}{n}}\right|^2,$$
   $$U^n(X)_t = \sum_{i=1}^{[nt]}\left|\tilde\Delta X_{\frac{i-1}{n}}\right|^2 = \sum_{i=1}^{[nt]}\left|X_{\frac{i+1}{n}} - X_{\frac{i-1}{n}}\right|^2.$$
We may find that at time $t=T$,
   $$V^n(X)_T = \sigma^2 \left(\alpha^2 V^n(W)_T + V^n(B^H)_T + 2\alpha \sum_{i=1}^{[nT]}\left|\Delta W_{\frac{i-1}{n}}\right|\left|\Delta B^H_{\frac{i-1}{n}}\right|\right) + o(n^{-1}),$$
   $$U^n(X)_T = \sigma^2 \left(\alpha^2 U^n(W)_T + U^n(B^H)_T + 2\alpha \sum_{i=1}^{[nT]}\left|\tilde\Delta W_{\frac{i-1}{n}}\right|\left|\tilde\Delta B^H_{\frac{i-1}{n}}\right|\right) + o(n^{-1}).$$
From the ergodic theorem, we have the following asymptotic results (see \cite{hu19} and the references therein),
$$ \lim_{n \to \infty} V^n(W)_T =T, \ \lim_{n \to \infty} n^{2H-1} V^n(B^H)_T = T ,$$ 
and $$\lim_{n \to \infty} n^{H-\frac{1}{2}}\sum_{i=1}^{[nT]}\left|\Delta W_{\frac{i-1}{n}}\right|\left|\Delta B^H_{\frac{i-1}{n}}\right| = T,$$
in probability. Similarly,
$$ \lim_{n \to \infty} U^n(W)_T = \mathbb{E}\left[|W_2|^2\right] = 2T, \ \lim_{n \to \infty} n^{2H-1} U^n(B^H)_T = \mathbb{E}\left[\left|B_2^H\right|^2\right] = 2^{2H} T,$$ 
and $$\lim_{n \to \infty} n^{H-\frac{1}{2}}\sum_{i=1}^{[nT]}\left|\tilde\Delta W_{\frac{i-1}{n}}\right|\left|\tilde\Delta B^H_{\frac{i-1}{n}}\right| = 2^{H+\frac{1}{2}}T,$$
in probability. Taking into account of $H > \frac{1}{2}$, we have
   $$\lim_{n \to \infty} V^n(X)_T = \frac{1}{2} \lim_{n \to \infty} U^n(X)_T = \alpha^2 \sigma^2 T,$$
and
  $$\lim_{n \to \infty} n^{H-\frac{1}{2}}\left(-2V^n(X)_T + U^n(X)_T\right) = 2\alpha\sigma^2 T\left(2^{H+\frac{1}{2}}-2\right),$$
in probability. Based on the above results, we can construct the consistent estimators for the parameters $\alpha$ and $\sigma$:
    \begin{equation}\label{alpha.pe}
        \hat\alpha = \frac{2 \left(2^{H+\frac{1}{2}}-2\right) V^n(X)_T}{n^{\hat H-\frac{1}{2}}\left(U^n(X)_T - 2V^n(X)_T\right)},
    \end{equation}
    \begin{equation}\label{sigma.pe}
       \hat\sigma = \sqrt{\frac{V^n(X)_T}{\hat\alpha^2 T + T n^{1-2\hat H} + 2 \hat\alpha T n^{\frac{1}{2} - \hat H}}}.
    \end{equation}
The parameters $m, \theta$ in the drift part can be estimated using the least squares method (see, for example, \cite{hu19} and \cite{Tindel2014} for the idea). Namely, we would like to minimize the function $$L(m, \theta) = \sum_{i=1}^{[nT]} \left|X_{\frac{i}{n}} - X_{\frac{i-1}{n}} - \frac{1}{n}\left(m - \theta X_{\frac{i-1}{n}}\right)\right|^2.$$ Then the least squares estimators $\hat m$ and $\hat \theta$ can be defined by $(\hat m, \hat \theta) = {\rm argmin} \ L(m, \theta)$, and we can derive the explicit expressions for the estimators:
    \begin{equation}\label{m.pe}
        \hat m = n\frac{\left(\sum_{i=1}^{[nT]} \Delta X_{\frac{i-1}{n}} \right) \left(\sum_{i=1}^{[nT]} X^2_{\frac{i-1}{n}}\right) - \left(\sum_{i=1}^{[nT]} \Delta X_{\frac{i-1}{n}} \cdot X_{\frac{i-1}{n}} \right) \left(\sum_{i=1}^{[nT]} X_{\frac{i-1}{n}}\right)}{ [nT]\sum_{i=1}^{[nT]} X^2_{\frac{i-1}{n}} - \left(\sum_{i=1}^{[nT]} X_{\frac{i-1}{n}}\right)^2},
    \end{equation}
    \begin{equation}\label{theta.pe}
      \hat \theta = n\frac{\left(\sum_{i=1}^{[nT]} \Delta X_{\frac{i-1}{n}} \right) \left(\sum_{i=1}^{[nT]} X_{\frac{i-1}{n}} \right) - [nT]\sum_{i=1}^{[nT]}\Delta X_{\frac{i-1}{n}} \cdot X_{\frac{i-1}{n}}}{[nT]\sum_{i=1}^{[nT]} X^2_{\frac{i-1}{n}}- \left(\sum_{i=1}^{[nT]} X_{\frac{i-1}{n}}\right)^2}.
    \end{equation}
	
Alternatively, the parameters $m, \theta$ can be estimated by the ergodic-type estimators that are consistent (see \cite{hu19}, \cite{xiao23}). Namely, the mean reversion rate $\theta$ can be estimated by 
	 $$\hat\theta = 
     \left(\frac{[nT]\sum_{i=1}^{[nT]} X_{\frac{i}{n}}^2 - \left(\sum_{i=1}^{[nT]} X_{\frac{i}{n}}\right)^2}{[nT]^2 \hat\sigma^2 \hat H \Gamma(2\hat H)}\right)^{-\frac{1}{2 \hat H}}.$$
 The quantity $\frac{m}{\theta}$ is known as the long term mean of $X_t$ and it can be determined from an external resource (such as expert opinions or actuarial judgment on the long-term trend of $X_t$). Based on the estimation of the long-term mean, we can estimate the parameter $m$.

\subsection{Model calibration for the bivariate process $(r_t, \mu_t)$}\label{pe.biv}
		 
  In this section, we apply the methods described in Section~\ref{pe.sde} to estimate the eleven parameters in the bivariate process $(r_t, \mu_t)$ given by \eqref{int.mort.sde}. The sequential algorithm is described in the following six steps.
  
  \noindent Step 1: Apply the rescaled range analysis method to estimate the Hurst parameter $H_1$ for the interest rate process $r_t$. \\
  Step 2: Apply the formulas in Section~\ref{pe.sde} to estimate the parameter $\alpha_1$ and the volatility parameter $\sigma_1$ for the process $r_t$.\\
  Step 3: Estimate the parameters $m_1$ and $\theta_1$ for the process $r_t$ based on ergodic-type estimators. The motivation is to align the estimated mean reverting level of the SDE with the long-term mean, which is well known in practice.\\
  Step 4: Repeat the above steps 1 $\sim$ 3 to estimate the parameters $H_2$, $\alpha_2$, and $\sigma_2$ for the excess mortality process $\mu_t$.\\
  Step 5: Estimate the parameters $m_2$ and $\theta_2$ based on the least squares method, i.e., formulas \eqref{m.pe} and \eqref{theta.pe}, for the excess mortality process $\mu_t$. \\
  Step 6: Finally, we estimate the parameter $\rho$ using the formula
	$$\hat\rho = \frac{n \sum_{i=1}^{[nT]} (e_i^{(r)} - \bar e^{(r)})(e_i^{(\mu)} - \bar e^{(\mu)})}{\hat\sigma_1 \hat\sigma_2 \hat\alpha_1 \hat\alpha_2},$$
    where the sequences $e_i^{(r)}$ and $e_i^{(\mu)}$ can be derived from 
	$$e_i^{(r)} = r_{\frac{i}{n}} - r_{\frac{i-1}{n}} - \left(\hat m_1 - \hat\theta_1 r_{\frac{i-1}{n}}\right) \cdot \frac{1}{n},$$
	$$e_i^{(\mu)} = r_{\frac{i}{n}} - r_{\frac{i-1}{n}} - \left(\hat m_2 - \hat\theta_2 r_{\frac{i-1}{n}}\right) \cdot \frac{1}{n},$$
    with $\bar e^{(r)} = \frac{1}{[nT]} \sum_{i=1}^{[nT]} e_i^{(r)}, \ \bar e^{(\mu)} = \frac{1}{[nT]} \sum_{i=1}^{[nT]} e_i^{(\mu)}$. \\
	
\subsection{Calibration of pricing parameters}\label{pe.qsde}

The previous section addressed the parameter estimation problem for the excess mortality and interest rate models under the physical measure. In this section, we present a general calibration procedure for estimating the pricing parameters under a selected risk-neutral pricing measure. The proposed methodology will be implemented later using market data from the Swiss Re Vita VI bond.

We assume that the mortality bond is a par value bond, which means $P_0^{\rm MB}(c) = F$ and the risk premium is reflected in the coupon rate $c$. In other words, the coupon rate $c$ should be the risk-free rate at bond issuance plus a premium spread such that the bond price calculated from equation \eqref{price.mort.bond} equals $F$. Setting $P_0^{\rm MB} = F$, we can derive the formula of $c$ from equation \eqref{price.mort.bond} to obtain
\begin{equation} \label{eq:coupon}
c = \frac{1 - P(0, T) + \mathbb{E}^{\mathbb{Q}}\left[{\rm PRF}(\tilde \mu) \times e^{-\int_0^T r_s ds} \big|\mathcal{F}_0^r\right]}{\frac{1}{m}\sum_{0 < t_k \leq T}P(0, t_k)}.
\end{equation}

The mortality-linked security (MLS) traded in the market generally provides information about the coupon rate $c_{\rm obs}$, expected loss $EL_{\rm obs}$, probability of attachment $PA_{\rm obs}$, and premium spread $PS_{\rm obs}$. The general principle is to match the mathematical equations derived from the stochastic models with the market information. The calibration process for the MLS involves numerical solutions for the following parameters: attachment point $a$, exhaustion point $b$, risk premiums $\gamma_1, \gamma_2, \eta_1, \eta_2$. In general, we can follow the algorithm below to estimate these parameters. \\

\noindent Step 1: Obtain simulated mortality rate trajectories over the MLS term from the mortality model under physical measure $\mathbb{P}$. For our proposed model, we can simulate the excess mortality rates from the model \eqref{int.mort.sde} and recover the mortality rates by adding back the selected baseline mortality rates. \\

\noindent Step 2: Calculate the mortality index $\tilde\mu$ based on the MLS design as discussed in Section \ref{MLS.pricing} for all simulated trajectories obtained in Step 1 and derive the estimated distribution of $\tilde\mu$. If the probability of attachment information is available on the market, the attachment point $a$ can be estimated by the $100(1-PA_{\rm obs})$th percentile of the distribution of $\tilde\mu$. If the expected loss information is available, we can obtain an estimation of the exhaustion point $b$ through a selected optimization method such that
   $$EL_{\rm obs} = \mathbb{E}[{\rm PRF}(\tilde \mu)].$$

\noindent Step 3: For candidate values of the risk premium parameters, simulations are performed for the interest rate and the excess mortality rate under the pricing measure $\mathbb{Q}$ so that the right-hand side of equation \eqref{eq:coupon} can be numerically approximated. Then an optimization approach can be implemented to minimize the absolute differences between the observed coupon rates and the approximated coupon rates from numerical simulations. The optimization process is similar if the premium spread information is available, as it is essentially the difference between the coupon rate and the risk-free interest rate.

The above general methodology can be applied for many types of MLS, but the specific formulas and selection of the optimization procedure will be customized and provided for a real-world mortality bond in Section~\ref{sec:Qcali}.

\section{Numerical Illustration} \label{sec:NumIll}

In this section, we present a numerical illustration of the proposed model and pricing framework presented in Sections \ref{sec:ModelDescr} and \ref{sec:PricFram}, using real-world data applied to a catastrophe mortality bond. We begin by presenting the calibration results obtained through the methodologies and procedures described in Section \ref{sec:ParaEst}. We then examine the pricing outcomes and risk characteristics of the bond under the calibrated model. Lastly, we conduct an extensive sensitivity analysis to investigate how changes in model assumptions and pricing parameters influence the bond's coupon rate and risk profile.

\subsection{Data description} \label{sec:NumIll_Data}

The mortality and interest rate data used in our numerical illustration are from 2015 to 2024, with a total of 521 weekly observations. The data are collected from the following sources:

\begin{itemize}
    \item {\it Human Mortality Database}\footnote{HMD. Human Mortality Database. Max Planck Institute for Demographic Research (Germany), University of California, Berkeley (USA), and French Institute for Demographic Studies (France). Available at \textit{www.mortality.org} (data downloaded on March 26, 2025).}: The mortality data are taken from the Short-Term Mortality Fluctuations (STMF) series for the US population and consist of weekly, all-age, unisex death counts and exposures. The excess mortality rate at week $t$ is computed as
    \[
    \mu_t = \frac{D_t}{E_t} - \mathbb{E}\left[\frac{D_t}{E_t}\right],
    \]
    where $D_t$ is the observed number of deaths and $E_t$ is the corresponding exposure in week $t$. To remove seasonal variation, the expected death rate $\mathbb{E}[D_t/E_t]$ is estimated using week-specific averages over the years 2015-2019 (i.e., the pre-COVID years of our dataset). That is, the expected rate for the $m$-th week of the year is calculated as the average observed rate for the $m$-th week across those years.

    \item {\it Federal Reserve Economic Data (FRED)}\footnote{Board of Governors of the Federal Reserve System (US), 3-Month Treasury Bill Secondary Market Rate, Discount Basis [WTB3MS], retrieved from FRED, Federal Reserve Bank of St. Louis; https://fred.stlouisfed.org/series/WTB3MS, March 26, 2025.}: The interest rate data consist of weekly 3-month Treasury bill rates from the WTB3MS series. These rates are used directly as $r_t$ in the model without further adjustment.

\end{itemize}

\subsection{Calibration results}

This section presents the calibration results for our proposed model \eqref{int.mort.sde} and its risk neutral version \eqref{int.mort.qsde}. While a general overview of the calibration procedure for the risk premium parameters is provided in Section~\ref{pe.qsde}, this section offers a more detailed explanation of the procedure customized for the data described in Section \ref{sec:NumIll_Data}.

\subsubsection{Calibration under measure $\mathbb{P}$ for model parameters}
We apply the methods described in Section~\ref{pe.sde} and Section~\ref{pe.biv} to calibrate the model \eqref{int.mort.sde} using the collected data on the mortality and interest rates. The calibration results are summarized in Table \ref{para.est1}. The top row reports parameter estimates for the interest rate process $r_t$, and the bottom row reports parameter estimates for the excess mortality process $\mu_t$. The Hurst parameters $H$ for both $r_t$ and $\mu_t$ are clearly greater than $0.5$, showing that both the interest rate process and the excess mortality rate process possess a LRD property. The estimated instantaneous correlation between $r_t$ and $\mu_t$ is $\rho = -0.29265$. The negative correlation is expected, as empirical observations show that interest rates and mortality rates have tended to move in opposite directions in recent years, particularly during the COVID pandemic. While this correlation does not imply a causal relationship, both variables may be influenced by common external factors.

\begin{table}[!ht]
    \centering
    \caption{Estimated parameters of the proposed model calibrated to weekly mortality and interest rate data from 2015 to 2024.}
    \begin{tabular}{ccccccc}
	\hline
	\qquad\quad\quad\quad & $H$ & $\alpha$ & $\sigma$ & $m$ & $\theta$ & $\rho$ \\
	\hline
	$r_t$ & $0.85957$ & $0.24815$ & $1.24565$ & $2.26377$ & $0.54157$ & \multirow{2}{*}{$-0.29265$} \\
	$\mu_t$ & $0.78416$ & $0.32636$ & $0.00286$ & $0.00068$ & $1.17364$ &  \\
        \hline
    \end{tabular}
    \label{para.est1}
\end{table}


Moreover, both $\alpha$ estimates, which represent the weights on the Brownian motion components, are substantially below 1. This suggests that the fractional Brownian motion components play a more dominant role in driving the dynamics of both the interest rate and excess mortality rate over the observation period. As discussed earlier, the presence of long memory, captured by the fractional Brownian motion, increases the instantaneous variances of the processes. This inflation reduces the magnitude of the instantaneous correlation, keeping it below 0.29265 over small time intervals. Finally, the estimated mean reversion rate for the mortality process exceeds that of the interest rate, implying that mortality tends to revert to its long-term mean more quickly than the interest rate.

\subsubsection{Calibration under measure $\mathbb{Q}$ for pricing parameters}
\label{sec:Qcali}

We use real market data to calibrate the pricing parameters under the risk-neutral measure $\mathbb{Q}$, as described in Theorem~\ref{prob.measure.q}. In particular, we calibrate the pricing parameters to match the price of the Vita Capital VI bond, issued in July 2021. The relevant pricing information of this bond, collected from Artemis\footnote{Source: \url{https://www.artemis.bm/deal-directory/vita-capital-vi-limited-series-2021-1/}}, is reported in Table~\ref{tab:VitaData}. Based on the available market information, we assume that the Vita VI bond is a 5-year par value bond that makes annual payments following equation \eqref{mort.bond.payment} at a specified coupon rate of $3\%$. The mortality index that determines the bond's principle reduction factor is the average mortality rate observed over each year during the 5-year term. 

\begin{table}[h]
   \centering
   \caption{Pricing information of the Vita Capital VI bond.}
   \begin{tabular}{c|c|c|c}
      \hline
      Term & Probability of First Loss & Expected Loss & Coupon Rate \\
      \hline
      5 Years & 0.75\% & 1.06\% & 3\% \\
      \hline
   \end{tabular}
   \label{tab:VitaData}
\end{table}

The general methodology in Section~\ref{pe.qsde} can be applied to find other unknown parameters of this specific mortality bond. Step 1 and Step 2 in Section~\ref{pe.qsde} can be implemented to determine the attachment point and the exhaustion point. To match the coupon rate, the evaluation of equation \eqref{price.mort.bond} is not straightforward and it may not be easy to find a closed formula due to the challenge of deriving the conditional expectation of a functional that involves extremal functionals with respect to fBm. However, we will perform the simulation-based optimization procedure as described in Step 3 of Section~\ref{pe.qsde}.

As we have limited data information, to reduce the dimensionality of the parameter's space for optimization, we mandate that the risk premiums for the fBm part are zero, that is, $\eta_1 = \eta_2 = 0$. To determine the risk premium $\gamma_1$, we can implement an optimization procedure to match the price of the zero coupon bond under the measure $\mathbb{Q}$ with the one determined by the target interest rate $i$. That is, the optimal value of the risk premium $\gamma_1$ is given by
  $${\rm argmin}_{\gamma_1} |P(0, T)_{m_1'} - (1+i)^{-5}|,$$
where $P(0, T)_{m_1'}$ is the price of the zero coupon bond from Theorem~\ref{risk.free.bond} under the probability measure $\mathbb{Q}$ with the parameter $m_1' = m_1 + \alpha_1 \sigma_1 \gamma_1 + \sigma_1 \eta_1$ given in Theorem~\ref{prob.measure.q}. The interest rate $i$ measures the expected yield rate of a 5-year corporate bond for pricing purposes. We choose $i=2.57\%$, which is the monthly average AAA bond (annualized) spread when the mortality bond was issued. We adopt the fmincon optimization method in Matlab to obtain the optimal value of the parameter $\gamma_1$ as $\hat\gamma_1 = 3.8701$.

The attachment point can be found numerically by finding a value $a$ such that the probability of attachment (i.e., the probability of first loss) is 1.06\%,
  $$P\left({\rm PRF}(\tilde{\mu}) > 0\right) = 1.06\%.$$
Note that ${\rm PRF}(\tilde{\mu}) > 0$ if and only if $\{\tilde{\mu}_{t_k} > a | k = 1, \ldots, 5\} \neq \emptyset$.
We can simulate 10,000 trajectories of excess mortality according to the SDE under the pricing measure $\mathbb{Q}$ and add the baseline mortality rate (i.e., the expected death rate $\mathbb{E}[D_t/E_t]$) to the simulated excess mortality. In this way, we obtain the 10,000 simulated weekly mortality rates $\mu_t$. The principal reduction factor (PRF) is then determined as
  $${\rm PRF}(\tilde{\mu}) = \min\left(1, \sum_{t_k= 1}^5 \frac{(\tilde{\mu}_{t_k} - a)_+ - (\tilde{\mu}_{t_k} - b)_+}{b - a} \right),$$
where $\tilde{\mu}_{t_k}$ is the average mortality rate for year $t_k$, determined by the simulated weekly mortality rates and computed as $\tilde{\mu}_{t_k} =\frac{\sum_{m=1}^{52}\mu_{52\times t_k +m}}{52}$.
The $98.94$th percentile of all values $\{\tilde{\mu}_{t_k}: k=1,\ldots,5\}$ will be the estimated value of the attachment point $\hat a$. 

Given the expected loss data information, we can set up the equation
  $$\mathbb{E}\left[{\rm PRF} (\tilde{\mu})\Big|{\rm PRF}(\tilde{\mu}) > 0 \right] \times P\left({\rm PRF}(\tilde{\mu}) > 0\right) = 0.75\%.$$
The exhaustion point $b$ can be estimated by solving the above equation via minimizing the function
   $$f(b) = \left|\frac{1}{N}\sum_{i=1}^N \min\left(1, \sum_{t_k= 1}^T \frac{(\tilde{\mu}_{i, t_k} - a)_+ - (\tilde{\mu}_{i, t_k} - b)_+}{b - a} \right)  - \frac{0.75\%}{1.06\%}\right|,$$
where $\tilde{\mu}_{i, t_k}$ is the $i$th simulated value of $\tilde{\mu}_{t_k}$ that satisfies $\tilde{\mu}_{i, t_k} > a$, and $N$ is the number of simulated trajectories such that the attachment point is triggered. 

Our last step is to find the value of the risk premium $\gamma_2$ such that the coupon rate 3\% satisfies the pricing equation \eqref{eq:coupon}. To achieve this purpose, we implement the grid search method for the target value $\gamma_2$ in the interval $[0, 2]$ with step size $0.001$, such that
$$3\% = \frac{1 - P(0, T) + \frac{1}{N} \sum_{i=1}^N \min\left(1, \sum_{t_k= 1}^T \frac{(\tilde{\mu}_{i, t_k} - a)_+ - (\tilde{\mu}_{i, t_k} - b)_+}{b - a} \right) \times e^{- \sum_{t_k=1}^5 \sum_{m=1}^{52} r_{i, 52t_k + m}}}{\frac{1}{m}\sum_{0 < t_k \leq T}P(0, t_k)}$$
holds numerically, where $r_{i, 52t_k + m}$ are the simulated short rates under pricing measure $\mathbb{Q}$. The optimal value for the risk premium $\gamma_2$ is finally found to be $\hat\gamma_2 = 1.0620$.

\subsection{Baseline analysis}

After calibrating the model and pricing parameters to real mortality data and market information, we now analyze a $5$-year catastrophe mortality bond structured similarly to the Vita VI bond. We set the face value to $F = 100$ and assume that the bond makes annual payments according to the formula \eqref{mort.bond.payment}. The mortality index that determines the bond’s principal reduction factor is defined as the yearly average mortality rate over the $5$-year term. We use simulated paths of mortality and interest rates under measure $\mathbb{Q}$ to determine the coupon rate, and under measure $\mathbb{P}$ to compute the present value of the bond's payouts at issuance (i.e., time 0).

\subsubsection{Simulation of mortality and interest rates}

We begin by presenting the forecast results generated by our calibrated model under the risk-neutral measure $\mathbb{Q}$. These simulated paths are used to calculate the fair coupon rate $c$ based on Equation~\eqref{eq:coupon}. The simulated mortality rates are computed as the sum of the excess mortality rates generated by our model and the expected death rate $\mathbb{E}[D_t/E_t]$. Recall from Section~\ref{sec:NumIll_Data} that $\mathbb{E}[D_t/E_t]$ is estimated as the average of the weekly mortality rates observed during the years 2015-2019 (i.e., the pre-COVID years of our dataset).

\begin{figure}[htbp]
    \centering
    \includegraphics[width=0.95\textwidth]{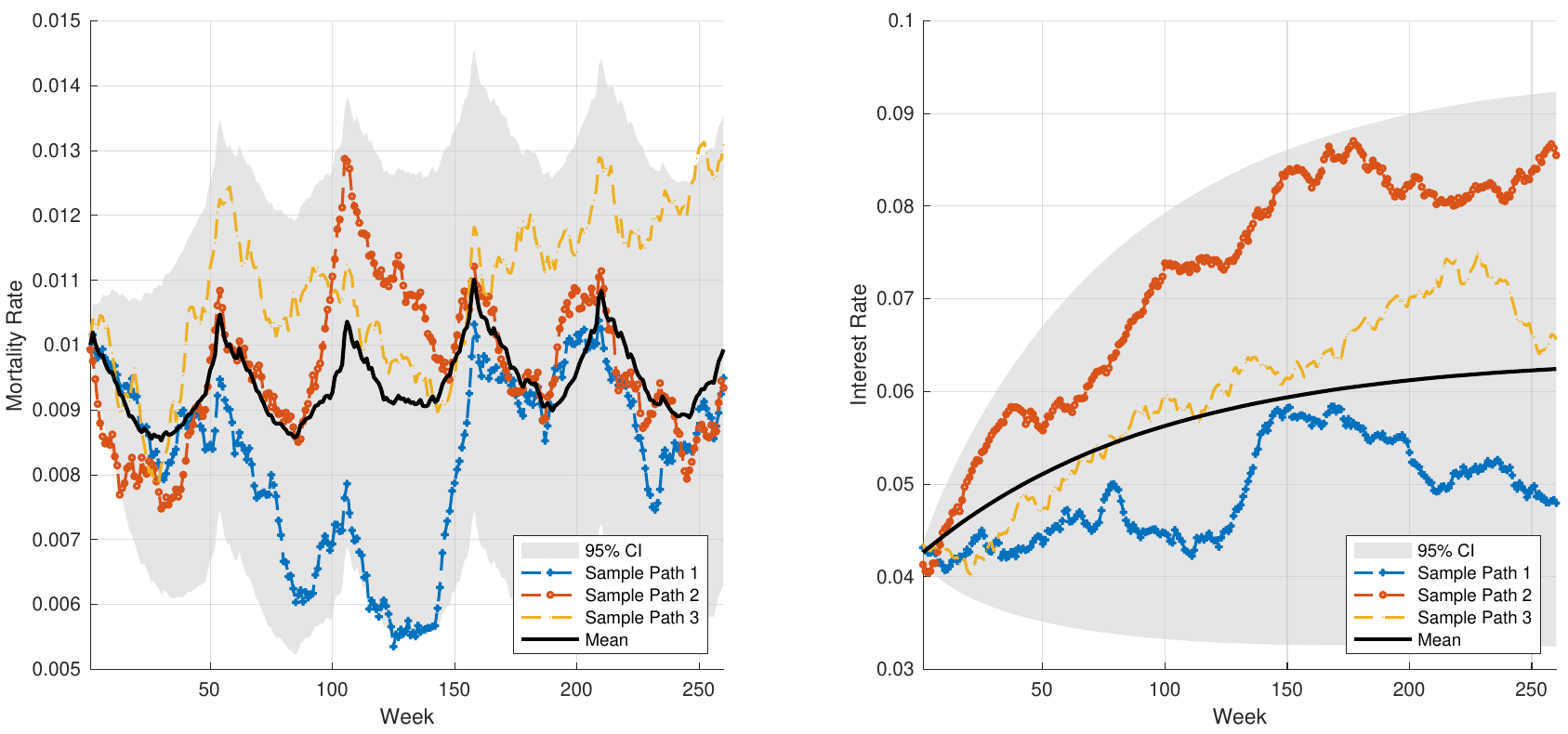}
    \caption{Simulated paths of weekly mortality rates (left panel) and interest rates (right panel) with 95\% confidence intervals under $\mathbb{Q}$-measure.}
    \label{fig:sim_paths_mortality_interest}
\end{figure}

Figure~\ref{fig:sim_paths_mortality_interest} shows the simulated weekly paths of mortality and interest rates over a 250-week period. Each plot includes three representative simulation paths, along with the estimated mean and 95\% confidence intervals. The simulations are generated using our calibrated model under the pricing measure $\mathbb{Q}$. The left panel shows the mortality rates fluctuating around the seasonal means with occasional upward spikes, while the right panel shows simulated interest rates that exhibit strong mean-reverting behavior within a plausible range around 3\% to 9\%, consistent with historical interest rate behavior. The gradual widening of the confidence intervals in both panels highlights the persistence and increasing uncertainty over time, a feature well captured by our mfBm model structure. 

\subsubsection{Principal repayment}

We now investigate the payments of the catastrophe mortality bond using our calibrated model under the physical measure $\mathbb{P}$. Specifically, we analyze the distribution of principal repayment and total payouts based on simulated mortality and interest rates. Figure~\ref{fig:prf_histogram} presents the distribution of simulated principal repayment values at time 0, computed under the physical measure $\mathbb{P}$. Each value represents the time-0 present value of the principal repayment paid at maturity, discounted using the corresponding simulated interest rate paths. The repayment amount is determined by the principal reduction factor, which is calculated using the simulated mortality rate paths over the bond's 5-year term.

\begin{figure}[th!]
    \centering
    \includegraphics[width=0.6\textwidth]{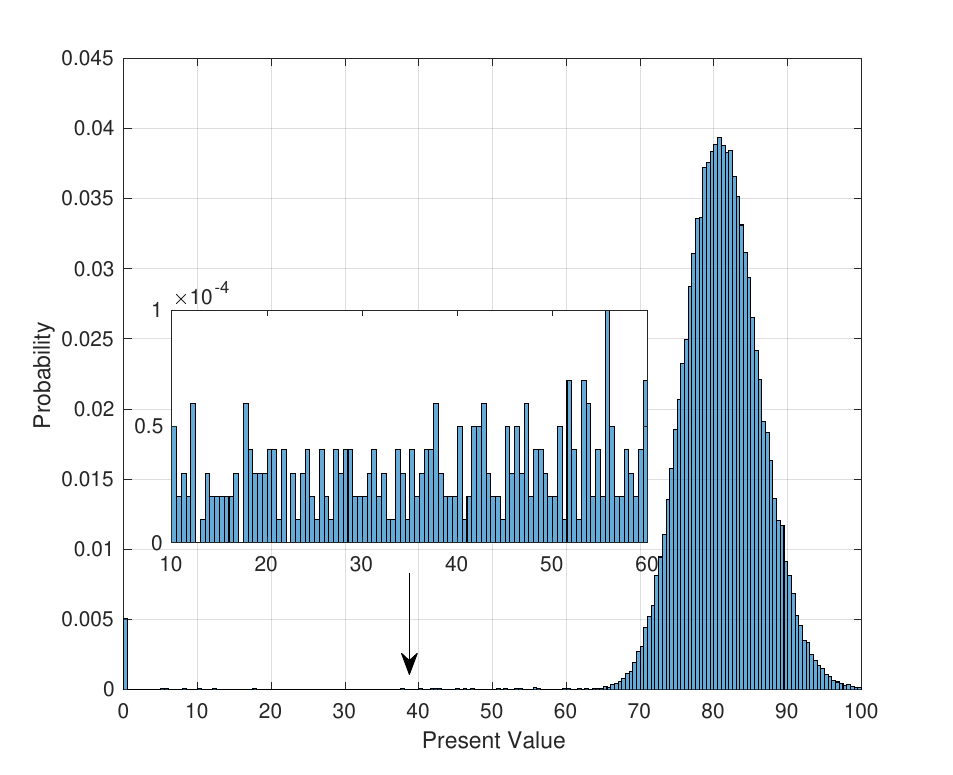}
    \caption{Histogram of simulated present values of principal repayment at time 0, computed under the physical measure $\mathbb{P}$ using simulated mortality and interest rate paths.}
    \label{fig:prf_histogram}
\end{figure}

The main histogram of Figure~\ref{fig:prf_histogram} shows the distribution of discounted principal repayments across all simulated mortality and interest rate scenarios. A large proportion of the mass is concentrated near full repayment, reflecting typical cases where the mortality index remains below the attachment point and investors recover the entire principal. To better examine tail risks, the figure includes a zoom-in on the lower repayment region, specifically values between 10 and 60. This region corresponds to extreme mortality scenarios, where high mortality rates trigger significant reductions in principal repayment. Lastly, a small but noticeable proportion of the mass is concentrated at zero repayment, indicating the likelihood of the most severe cases.

\begin{table}[htbp]
\centering
\begin{tabular}{ccccccc}
\toprule
Risk measures & Mean & Std & VaR$_{5\%}$ & CTE$_{5\%}$ & VaR$_{1\%}$ & CTE$_{1\%}$ \\
\midrule
PV of principal repayment & 80.59 & 8.27 & 72.74 & 60.80 & 66.40 & 19.95 \\
PV of total bond payouts  & 106.41 & 8.77 & 97.24 & 85.30 & 90.29 & 45.50 \\
\bottomrule
\end{tabular}
\caption{Risk measures for the present value (PV) of principal repayment and the present value of total bond payouts under the physical measure $\mathbb{P}$.}
\label{tab:PrincipalRepaymentRisk}
\end{table}

Table~\ref{tab:PrincipalRepaymentRisk} summarizes key numerical risk measures of the simulated principal repayment values at time 0. Specifically, we report the mean, standard deviation (Std), Value-at-Risk at the $5\%$ level (VaR$_{5\%}$) and $1\%$ level (VaR$_{1\%}$), and Conditional Tail Expectation at the $5\%$ level (CTE$_{5\%}$) and $1\%$ level (CTE$_{1\%}$). The mean repayment amount is approximately $80.59$, indicating that investors typically recover a substantial portion of the principal under common mortality scenarios. The tail risk measures offer a deeper insight into extreme scenarios. The VaR$_{5\%}$ value of $72.74$ indicates that there is a $5\%$ probability that principal repayment will fall below this threshold, while the CTE measures emphasize the severity of these tail scenarios. The CTE$_{5\%}$ at $60.80$ and the substantially lower CTE$_{1\%}$ at $19.95$ clearly show that extreme mortality events can lead to substantial losses.

\subsubsection{Total payouts}

Next, we analyze the bond's total payouts, including both coupon payments and principal repayments. All payouts are discounted to the present values at time 0 using the corresponding simulated interest rate paths. Principal repayments are adjusted according to the simulated mortality paths. Examining this result provides a comprehensive view into the overall financial performance of the bond under simulated scenarios.

\begin{figure}[th!]
    \centering
    \includegraphics[width=0.6\textwidth]{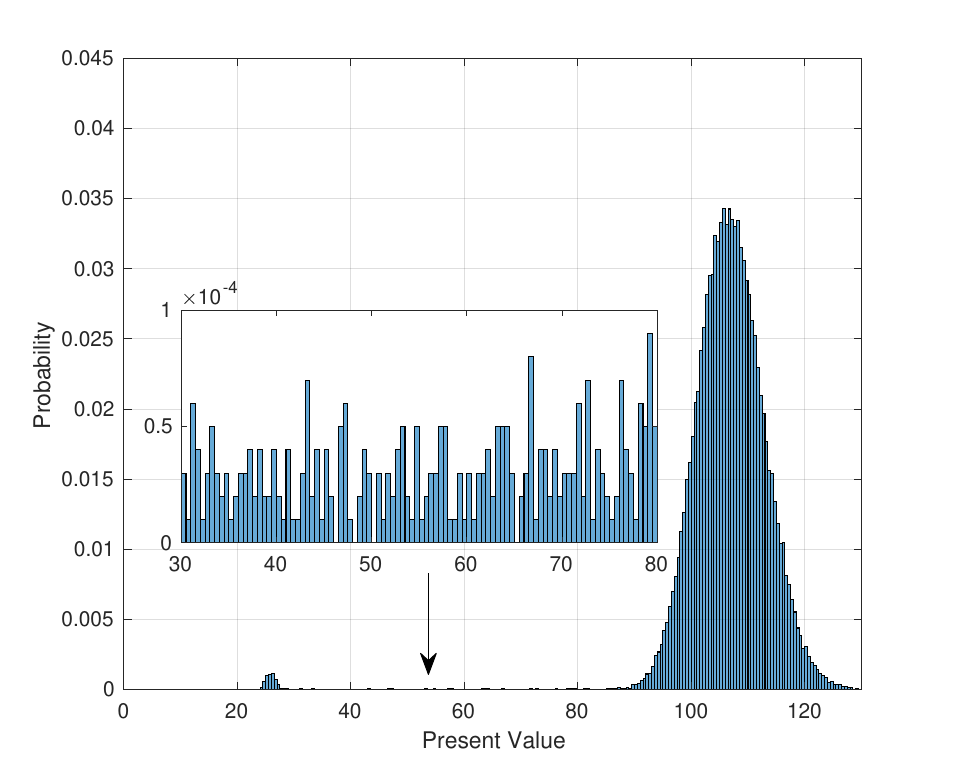}
    \caption{Histogram of simulated present values of total bond payouts (coupons plus principal) at time 0, computed under the physical measure $\mathbb{P}$ using simulated mortality and interest rate paths.}
    \label{fig:bond_payouts_histogram}
\end{figure}

Figure~\ref{fig:bond_payouts_histogram} displays the distribution of simulated bond payouts, reflecting both coupon payments and principal repayments discounted to time 0. The main histogram shows that most simulated scenarios result in total payouts above the face value ($F=100$), driven by the relatively high coupon rate. The zoom-in region highlights rare but severe outcomes, corresponding to significant principal reductions triggered by extreme mortality scenarios. Lastly, a small number of simulated scenarios correspond to complete principal loss, resulting in total payouts of only about 20–30 from coupon payments alone.

Table~\ref{tab:PrincipalRepaymentRisk} provides numerical risk measures summarizing the distribution of the bond's total payouts at time 0. The mean payout is approximately 106.41, indicating an expected outcome slightly above par value due to the high coupon accruals. The standard deviation of 8.77 reflects noticeable variability arising from mortality and interest rate uncertainties. The VaR$_{5\%}$ of 97.24 and VaR$_{1\%}$ of 90.29 highlight potential losses in adverse mortality scenarios, while the corresponding CTE values underscore the severity of these tail events, with CTE$_{5\%}$ at 85.30 and CTE$_{1\%}$ dropping to only 45.50.

\subsection{Sensitivity analysis}

In this subsection, we conduct a series of sensitivity analyses to better understand how the bond’s outcomes respond to changes in input assumptions. We examine key quantities, including loss metrics, payout measures, and the fair coupon rate. Three categories of parameters are considered: (i) bond design features, such as attachment and detachment points; (ii) model parameters estimated under the physical measure; and (iii) pricing parameters calibrated under the risk-neutral pricing measure. Each scenario builds on the baseline results presented earlier. Unless otherwise specified, all other inputs and bond settings remain fixed at their baseline values.

\subsubsection{Sensitivity to bond parameters}
We first analyze how bond structural parameters, namely, the attachment point, detachment point, and bond duration, influence the fair coupon rate of a catastrophe mortality bond.

\begin{figure}[th!]
\centering
\includegraphics[width=0.6\textwidth]{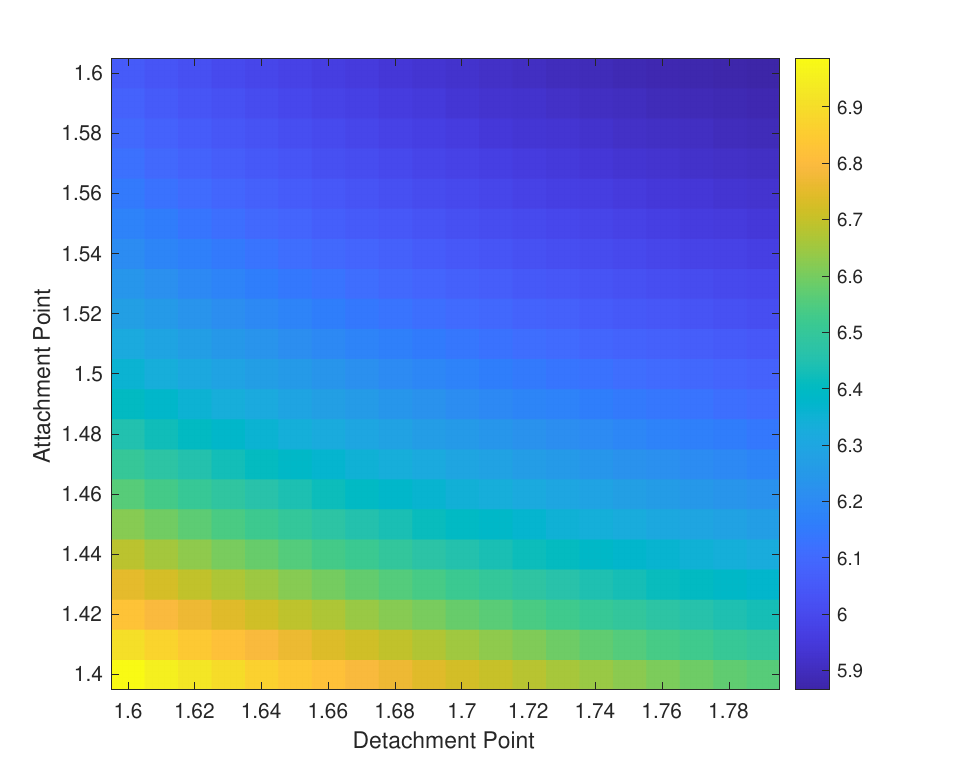}
\caption{Heatmap showing the fair coupon rate as a function of the attachment point and detachment point, calculated under the risk-neutral pricing measure $\mathbb{Q}$.}
\label{fig:sens_apdp}
\end{figure}

Figure~\ref{fig:sens_apdp} presents a heatmap of the fair coupon rate as a function of the attachment and detachment points, expressed as multiples of the mortality experience in the bond issue year 2024. As expected, the coupon rate decreases monotonically as either the attachment point or the detachment point increases. A higher attachment point reduces the likelihood of triggering principal reduction, while a higher detachment point limits the severity of potential principal losses.

More interestingly, the sensitivity shown in Figure~\ref{fig:sens_apdp} is not uniform across the parameter space. We observe that changes in the detachment point have a more pronounced effect when the attachment point is low. As the attachment point increases, the fair coupon rate becomes less sensitive to further increases in the detachment point. This pattern reflects a diminishing marginal effect of expanding the reduction coverage once the probability of triggering any principal reduction becomes sufficiently small.

\begin{table}[htbp]
\centering
\begin{tabular}{ccccc}
\toprule
Maturity Term & PFL & CEL & EL & Coupon Rate \\
\midrule
5 Years (Baseline) & 1.06\% & 70.75\% & 0.75\% & 5.85\% \\
4 Years & 0.78\% & 70.59\% & 0.55\% & 5.72\% \\
3 Years & 0.41\% & 67.51\% & 0.27\% & 5.51\% \\
2 Years & 0.15\% & 59.27\% & 0.09\% & 5.22\%  \\
1 Year  & 0.00\% & N/A & N/A & 4.85\% \\
\bottomrule
\end{tabular}
\caption{Loss metrics of the principle repayment and fair coupon rates for catastrophe mortality bonds with varying maturity terms, calculated under the risk-neutral measure $\mathbb{Q}$.}
\label{tab:coupon_rates}
\end{table}

Table~\ref{tab:coupon_rates} reports pricing metrics for catastrophe mortality bonds with maturity terms from 1 to 5 years, including the principal repayment's probability of first loss (PFL), conditional expected loss (CEL), expected loss (EL), and the fair coupon rate. All values are based on 10,000 simulations under the calibrated risk-neutral measure $\mathbb{Q}$. As the bond term shortens, both the probability and severity of losses decline, reflecting reduced exposure to extreme mortality events. This further leads to lower risk premiums, as reflected in steadily decreasing coupon rates.

The fair coupon drops from 5.85\% for a 5-year bond to 4.85\% for a 1-year bond. For the 1-year bond, no loss events were observed, so CEL and EL are not reported. The corresponding coupon rate thus reflects only interest rate risk. Although the CEL remains nearly unchanged between the 4-year and 5-year bonds, the expected loss decreases slightly, consistent with the reduction in PFL from 1.06\% to 0.78\%. These results showcase how bond duration affects loss likelihood, severity and coupon rate of catastrophe mortality bonds.

\subsubsection{Sensitivity to model parameters}

We now investigate the sensitivity of bond outcomes to various model parameters. In particular, we consider the following scenarios:
\begin{itemize}
    \item \textbf{Scenario 1:} The instantaneous correlation parameter is set to zero (i.e., $\rho = 0$), implying no correlation between the interest rate process $r_t$ and the excess mortality rate process $\mu_t$.
    \item \textbf{Scenario 2:} The Hurst parameters for both $r_t$ and $\mu_t$ are set to 0.5 (i.e., $H_1 = 0.5$ and $H_2 = 0.5$), indicating the absence of long-range dependence in either process.
    \item \textbf{Scenario 3:} The variance parameter of the interest rate process is doubled (i.e., $\sigma_1^2 = 2 \times 1.24565^2$), reflecting higher uncertainty in interest rate dynamics.
    \item \textbf{Scenario 4:} The variance parameter of the excess mortality rate process is doubled (i.e., $\sigma_2^2 = 2 \times 0.00286^2$), reflecting higher uncertainty in mortality dynamics.
    \item \textbf{Scenario 5:} The pricing parameter associated with $r_t$ is increased by 50\% (i.e., $\gamma_1 = 1.50 \times 3.8701$), representing a higher risk premium for interest rate risk.
    \item \textbf{Scenario 6:} The pricing parameter associated with $\mu_t$ is increased by 50\% (i.e., $\gamma_2 = 1.50 \times 1.0620$), representing a higher risk premium for mortality risk.
\end{itemize}

Table~\ref{tab:sensitivity_results} reports the fair coupon rate of the bond under each scenario, along with summary statistics and loss metrics for the present value of total bond payouts. For reference, the corresponding quantities under the baseline scenario are also included. As in the baseline case, the fair coupon rates are determined using simulations under the risk-neutral measure $\mathbb{Q}$, while the payout risk measures are computed under the physical measure $\mathbb{P}$.

\begin{table}[h!]
\centering
\caption{Sensitivity analysis results for the scenarios under consideration. Each row reports summary statistics of the present value of total bond payouts (mean, standard deviation, VaR, CTE), loss metrics of the principle repayment (PFL, CEL, EL), and the fair coupon rate.}
\label{tab:sensitivity_results}
\begin{tabular}{lcccccccc}
\toprule
Scenario & Mean & Std & VaR$_{5\%}$ & CTE$_{5\%}$ & PFL & CEL & EL & Coupon Rate \\
\midrule
Baseline   & 106.41 & 8.77 & 97.24 & 85.30 & 1.06\% & 70.75\% & 0.75\% & 5.85\% \\
Scenario 1 & 107.50 & 8.59 & 98.26 & 87.11 & 0.97\% & 70.97\% & 0.69\% & 6.08\% \\
Scenario 2 & 106.51 & 4.19 & 100.87 & 98.18 & 0.14\% & 65.38\% & 0.09\% & 5.77\% \\
Scenario 3 & 109.11 & 10.62 & 95.91 & 84.06 & 1.06\% & 70.75\% & 0.75\% & 6.42\% \\
Scenario 4 & 105.93 & 19.76 & 31.54 & 29.66 & 7.36\% & 83.84\% & 6.17\% & 6.73\% \\
Scenario 5 & 109.42 & 8.83 & 100.09 & 88.16 & 1.06\% & 70.75\% & 0.75\% & 6.53\% \\
Scenario 6 & 106.79 & 8.78 & 97.60 & 85.66 & 1.06\% & 70.75\% & 0.75\% & 5.93\% \\
\bottomrule
\end{tabular}
\end{table}

Scenarios 1 and 2 examine the effects of removing key dependence structures between mortality and interest rates. When correlation is removed in Scenario 1, the coupon rate increases noticeably and payout risk measures shift relative to the baseline, indicating that short-term co-movement between the two processes materially influences pricing. In contrast, removing long-range dependence in Scenario 2 results in a small reduction in the coupon rate and noticeably lower tail risk, suggesting that long-memory effects remain important for characterizing extreme outcomes. These findings confirm the need of incorporating both correlation and long-memory features into the joint modeling framework, as each plays a distinct role in shaping risk and return in catastrophe mortality bonds.

Scenarios 3 and 4 examine the effects of higher volatility in the interest rate process and the excess mortality rate process, respectively. Doubling the variance of interest rate in Scenario 3 leads to a noticeable increase in the coupon rate and greater variability in the bond payouts, indicating that interest rate volatility has a meaningful influence on bond pricing and return uncertainty. In contrast, doubling the variance of mortality rate in Scenario 4 results in a dramatic increase in tail risk, accompanied by a significantly higher coupon rate. This result suggests that mortality volatility is a comparatively stronger driver of both pricing and risk in catastrophe mortality bonds.

Scenarios 5 and 6 assess the impact of higher risk premiums for interest rate risk and mortality risk, respectively. In Scenario 5, increasing the interest rate risk premium leads to a noticeable increase in the coupon rate, reflecting the higher compensation required by investors for bearing additional interest rate risk. In Scenario 6, increasing the mortality risk premium by the same proportion produces a more moderate increase in the coupon rate. Overall, interest rate risk premiums have a stronger influence on investor compensation than mortality risk premiums. Lastly, the loss metrics under both scenarios remain unchanged relative to the baseline because these metrics are irrelevant to how risks are compensated under the risk-neutral measure $\mathbb{Q}$.

\section{Conclusion}

In conclusion, this paper presents a novel bivariate framework grounded in a mixed fractional Brownian motion to jointly capture the long-range dependence and instantaneous correlation between interest rate and excess mortality rate. By deriving the closed-form pricing formula for the zero-coupon bond and the pricing method for the mortality catastrophic bond under risk-neutral measure, we provide a tractable approach for valuing mortality-linked security in the environment characterized by persistent and correlated risks. 

The proposed sequential calibration method facilitates the practical estimation of the model parameters from real-world data. Our numerical experiments reveal that both interest rate and mortality risks are significant risk drivers for catastrophe mortality bonds. We demonstrate that the incorporation of long-term memory and cross-risk dependence should be considered for the design of MLS. These findings offer valuable insights for actuaries and risk managers who seek innovative ways to better design and price mortality-linked instruments and the related risk management strategies.

We acknowledge that there are limitations in our current work, and there remain research problems that are worth more in-depth explorations in the near future. First, excess mortality depends on the choice of baseline mortality. The risk associated with baseline mortality has not been explicitly considered in the design of MLS. It will be interesting to evaluate the impact of the choice of baseline mortality rates on the MLS design. Alternatively, we can refine our stochastic differential equations to model the mortality rate directly by adding features such as seasonality. 

Second, our work can be extended to study the mortality index and explore the design of other MLS based on mortality indexes. Lastly, while the current framework assumes a constant instantaneous correlation between mortality and interest rates, it would be valuable to explore more flexible correlation structures and develop corresponding calibration techniques, particularly in light of the empirical challenge posed by having only a single observed trajectory for each process.

\bibliographystyle{plain}
\bibliography{mfBm}


\end{document}